\newcommand{\Co}{{\mathcal{C}}}
\newtheorem{theorem}{Theorem} 
\newtheorem{lemma}{Lemma}
\newtheorem{remark}{Remark}
\newtheorem{property}{Property}
\newtheorem{proposition}{Proposition}
\newtheorem{corollary}{Corollary}
\newtheorem{claim}{Claim}
\newtheorem{definition}{Definition}
\newenvironment{lemma-repeat}[1]{\begin{trivlist}
		\item[\hspace{\labelsep}{\bf\noindent Lemma \ref{#1} }]\em }%
	{\end{trivlist}}
\newenvironment{theorem-repeat}[1]{\begin{trivlist}
		\item[\hspace{\labelsep}{\bf\noindent Theorem \ref{#1} }]\em }%
	{\end{trivlist}}
\newcommand{\remove}[1]{}
\DeclareMathOperator{\false}{{\scriptstyle{FALSE}}}
\DeclareMathOperator{\true}{{\scriptstyle{TRUE}}}
\newcommand{\size}[1]{\ensuremath{\left|#1\right|}}
\newcommand{\set}[1]{\left\{ #1 \right\}}
\newcommand{\C}{{\mathcal{C}}}
\begin{document}
	\date{}
	\begin{titlepage}
		\title{Beyond Alice and Bob: Improved Inapproximability for Maximum Independent Set in CONGEST}
		\author{Yuval Efron\footnote{efronyuv@gmail.com}\\Technion\and Ofer Grossman\footnote{ofer.grossman@gmail.com
		}\\MIT\and Seri Khoury\footnote{seri\textunderscore khoury@berkeley.edu}\\UC Berkeley}	
		\maketitle
		
		\begin{abstract}
			By far the most fruitful technique for showing lower bounds for the CONGEST model is reductions to two-party communication complexity. This technique has yielded nearly tight results for various fundamental problems such as distance computations, minimum spanning tree, minimum vertex cover, and more. 
			
			In this work, we take this technique a step further, and we introduce a framework of reductions to $t$-party communication complexity, for every $t\geq 2$. Our framework enables us to show improved hardness results for maximum independent set. Recently, Bachrach et al.[PODC 2019] used the two-party framework to show hardness of approximation for maximum independent set. They show that finding a $(5/6+\epsilon)$-approximation requires $\Omega(n/\log^6 n)$ rounds, and finding a $(7/8+\epsilon)$-approximation requires $\Omega(n^2/\log^7 n)$ rounds, in the CONGEST model where $n$ in the number of nodes in the network. 
			
			We improve the results of Bachrach et al. by using reductions to multi-party communication complexity. Our results:
			\begin{enumerate}
				\item Any algorithm that finds a $(1/2+\epsilon)$-approximation for maximum independent set in the CONGEST model requires $\Omega(n/\log^3 n)$ rounds.
				\item Any algorithm that finds a $(3/4+\epsilon)$-approximation for maximum independent set in the CONGEST model requires $\Omega(n^2/\log^3 n)$ rounds.	
			\end{enumerate}
		\end{abstract}
		\thispagestyle{empty}
	\end{titlepage}
\maketitle
\newcommand{\ThmMainLi}
{
	
	For any constant $0<\epsilon<1/2$, any algorithm that finds a $(1/2+\epsilon)$-approximation for maximum independent set in the CONGEST model requires $\Omega(n/\log^3 n)$ rounds.
	
}

\newcommand{\ThmMainQuad}
{
	
	For any constant $0<\epsilon<1/4$, any algorithm that finds a $(3/4+\epsilon)$-approximation for maximum independent set in the CONGEST model requires $\Omega(n^2/\log^3 n)$ rounds.
	
}


\section{Introduction and Related Work}

Consider a network of $n$ nodes, where each has a unique $O(\log n)$-bit identifier, and they can communicate with each other via synchronized communication rounds. In each round, each node can send a (possibly different) $O(\log n)$-bit message to each of its neighbors. The task of the nodes is to compute some function of the network (e.g., its diameter, the value of a maximum independent set, etc.), while minimizing the number of communication rounds. This model is well known as the CONGEST model, and it is one of the major models of theoretical distributed graph algorithms~\cite{Peleg:book00}. 

In recent years, our understanding of the complexity of some problems in the CONGEST model has been substantially improving, thanks to a fruitful technique for proving lower bounds via reductions to two-party communication complexity. This technique, that was implicitly introduced by Peleg and Rubinovich~\cite{PelegR00,PelegR99}, and explicitly and formally defined by Das Sarma et al.~\cite{SarmaHKKNPPW12,SarmaHKKNPPW11}, was used to prove many lower bounds for fundamental graph problems, such as distance computations~\cite{AbboudCHK16,FrischknechtHW12,HolzerP14}, minimum spanning tree~\cite{PelegR99,Elkin06,SarmaHKKNPPW11}, minimum cut~\cite{GhaffariK13,SarmaHKKNPPW11}, minimum vertex cover~\cite{Censor-HillelKP17}, constructing and verifying spanners~\cite{Censor-HillelD18,Censor-HillelKP16}, subgraph detection~\cite{DruckerKO13,FischerGKO18,GonenO17}, approximate max-clique~\cite{CzumajK18}, hardness of distribued optimization~\cite{BachrachCDELP19}, distributed random walks~\cite{NanongkaiSP11}, and more (a complete list of papers is infeasible). 

While the two-party communication complexity model is already a very successful source for reductions, a natural question is whether using more players can bring us even closer to a satisfactory understanding of the CONGEST model. In~\cite{DruckerKO13}, the authors use a reduction to the three-party set-disjointness problem to show a lower bound for triangle detection in the CONGEST-Broadcast model, where unlike in the CONGEST model, the nodes are not allowed to send different messages to different neighbors in each round, and they can only broadcast one $O(\log n)$-bit message to all their neighbors in each round. This lower bound doesn't translate to the CONGEST model (In fact, no lower bound better than $1$ round is known for triangle detection in the CONGEST model~\cite{abs-1711-01623,FischerGKO18}). Whether multi-party communication complexity is of any use to show stronger results for the CONGEST model has remained open.

In this work, we introduce a framework of reductions to multi-party communication complexity. Our framework enables us to show improved hardness results for maximum independent set in the CONGEST model. Recently, Bachrach et al.\cite{BachrachCDELP19} showed that any algorithm that finds a $(5/6+\epsilon)$-approximation for maximum independent set must spend $\Omega(n/\log^6 n)$ rounds. Furthermore, they showed that finding a $(7/8+\epsilon)$-approximation requires  $\Omega(n^2/\log^7 n)$ rounds, which is nearly tight, as any problem can be solved in $O(n^2)$ rounds in the CONGEST model. Our results:

\begin{theorem}\label{thm:2MIS}
	\ThmMainLi
\end{theorem}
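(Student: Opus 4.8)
The plan is to instantiate the $t$-party reduction framework of this paper for a well-chosen \emph{constant} number of players $t=\Theta(1/\epsilon)$, reducing from a $t$-party variant of set disjointness. Fix $t$ players $P_1,\dots,P_t$, where $P_i$ holds a set $S_i\subseteq[\ell]$ with $\ell=\Theta(n)$, and consider the promise problem of deciding whether the $S_i$ are pairwise disjoint, under a ``unique''-style promise on the intersecting case (this promise is what the $\epsilon$-slack in the approximation ratio will force us to carry). Three ingredients are needed: (i) an $\Omega(n)$ lower bound on the communication of this $t$-party problem in the message-passing model that the framework's CONGEST-to-communication simulation produces, which for constant $t$ follows by restricting attention to two of the players and invoking two-party (unique) disjointness on $\Theta(n)$ bits; (ii) a family of lower-bound graphs $G=G_{S_1,\dots,S_t}$ on $n=\Theta(\linearnum)$ vertices whose maximum independent set encodes the answer; and (iii) a bound of $O(\log^2 n)$ on the edge cut between the part of $G$ controlled by any one player and the rest. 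Given these, a $T$-round $(1/2+\epsilon)$-approximation algorithm yields a $t$-party protocol in which $O(T\log^3 n)$ bits cross each such cut (one $\log n$ for message length, one $\log^2 n$ for cut width), so $T\log^3 n=\Omega(n)$, i.e.\ $T=\Omega(n/\log^3 n)$.

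The heart of the argument is the graph family. I would build $G$ from $t$ ``blocks'' $V_1,\dots,V_t$, where $V_i$ is owned by $P_i$ and its internal edges encode $S_i$, together with $\binom t2$ small ``connector'' gadgets of size $\alpha=\Theta(\log^2 n)$ (accounting for the $\alpha t^2$ term in $\linearnum$), one per pair of blocks, through which — and only through which — distinct blocks interact. Each block is designed to admit two kinds of near-maximum independent sets: a ``doubled'' one of size $\approx 2m$ and a ``single'' one of size $\approx m$, where putting $V_i$ in doubled mode forces a commitment to a ``witness'' element $w_i\in[\ell]$ that is admissible only if $w_i\notin S_j$ for the other players $j$. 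The connectors are meant to enforce that (a) a globally consistent family of admissible witnesses exists precisely when the $S_i$ are pairwise disjoint, and (b) when they are not pairwise disjoint, at most a constant number of blocks can simultaneously be in doubled mode. Consequently, if the sets are pairwise disjoint then $\MaxIS(G)\geq 2tm$, whereas if some pair intersects then $\MaxIS(G)\leq (t+O(1))m+O(\alpha)$; the ratio between the two cases is $\approx\frac{t+1}{2t}$, which drops below $1/2+\epsilon$ once $t=\Theta(1/\epsilon)$. Thus any $(1/2+\epsilon)$-approximation algorithm can tell the two cases apart from the size of the independent set it returns, which by the reduction decides the $t$-party disjointness instance.

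With the construction in place, the reduction itself is routine: partition $V(G)$ among the players according to the blocks and connectors, let each player simulate the nodes it owns, and exchange across each connector exactly the $O(\log n)$-bit messages the algorithm sends along the $O(\log^2 n)$ cut edges; after $T$ rounds the approximation guarantee reveals the answer, giving the claimed round bound. Several points need care. The unique-disjointness promise is used both to absorb the $\epsilon$ (so that the NO/YES ratio is bounded strictly away from $1/2$ with room to spare) and to make property (b) of the connectors robust. The block gadgets must be engineered so that the ``doubled'' and ``single'' configurations are essentially the \emph{only} near-optimal independent sets, ruling out adversarial sets that mix modes or borrow vertices across blocks through the connectors. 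And one must check the vertex bookkeeping so that $n=\Theta(\linearnum)$ is indeed $\Theta$ of the input size, so that the communication lower bound is $\Omega(n)$ rather than something smaller.

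The step I expect to be the main obstacle is reconciling the two conflicting demands on the connector gadgets: to force an (almost) factor-$2$ global collapse of $\MaxIS(G)$ as soon as a single pair of input sets intersects — which intuitively calls for strong interaction among the blocks — while each connector may contribute only $O(\log^2 n)$ to the cut between a player and the rest, since a wider cut would weaken the final bound below $n/\log^3 n$. The reason the multi-party framework is the right tool is exactly that this interaction can be split across the $\binom t2$ thin connectors rather than funneled through a single two-party cut: each connector only needs to test one pair $S_i\cap S_j$ via a bit-gadget-style encoding, and it is the \emph{conjunction} of these local tests — together with the global witness-consistency constraint linking the blocks — that produces the collapse. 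Getting the arithmetic of this trade-off to close simultaneously with proving the matching $t$-party communication lower bound (in the model output by the simulation) is where the real work lies; once $t$ is taken to be a large enough constant in terms of $\epsilon$, everything else is bookkeeping.
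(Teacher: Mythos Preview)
Your high-level plan --- reduce from $t$-party promise pairwise disjointness with $t=\Theta(1/\epsilon)$ constant, build a lower-bound graph whose MaxIS gap ratio tends to $1/2$ as $t\to\infty$, and keep every player's cut polylogarithmic --- matches the paper. But the construction you sketch has two substantive gaps. First, your YES/NO direction is inverted, and the mechanism you propose does not produce the gap. You want pairwise-disjoint inputs to yield the large independent set via witnesses $w_i\notin S_j$; but under the promise, the ``intersecting'' case only guarantees one common index $m$, and nothing stops each player from choosing $w_i\in S_i\setminus\{m\}$ that avoids all other $S_j$ (take $S_i=\{m\}\cup\{i\}$), so all blocks can be doubled simultaneously in the intersecting case and your gap collapses. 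The paper goes the other way: in the \emph{intersecting} case all players pick the common index $m$, and the graph is arranged so that the corresponding per-block independent sets are mutually compatible, giving weight $\approx 2t\ell$; in the pairwise-disjoint case the heavy clique nodes $v^i_{m_i}$ necessarily have distinct $m_i$, and the cross-block structure forces the total down to $\approx (t+1)\ell$.

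Second --- and this is the technical heart you are missing --- the device that makes that collapse happen with only $O(\polylog n)$ cut edges is an \emph{error-correcting code} gadget, not a bit gadget. Each block carries, besides its size-$k$ clique, a ``code gadget'' of $(\ell+\alpha)^2=\Theta(\log^2 k)$ weight-$1$ nodes encoding codewords $\mathcal{C}(m)$ of a Reed--Solomon-type code with distance $\ell$; the clique node $v^i_m$ is wired so that including it forces the choice of $Code^i_m$ inside the code gadget. Cross-block edges connect \emph{only} the code gadgets (hence the small cut), and the large minimum distance of $\mathcal{C}$ guarantees that whenever $m_i\neq m_j$ there is a matching of size $\geq\ell$ between $Code^i_{m_i}$ and $Code^j_{m_j}$; summing these matchings is exactly what drives the IS from $2t\ell$ down to $(t+1)\ell+O(\alpha t^2)$. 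A generic ``bit-gadget-style encoding'' does not deliver this: bit representations of $m_i\neq m_j$ may differ in a single position, whereas the IS penalty you need is of order $\ell$, independent of how close $m_i$ and $m_j$ are. Finally, your appeal to two-party disjointness for the communication lower bound does not go through for this promise problem --- any fixing of the remaining $t-2$ inputs violates the promise in one of the two cases --- so one must invoke the $\Omega(k/(t\log t))$ bound of Chakrabarti--Khot--Sun directly, as the paper does.
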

\begin{theorem}\label{thm:3/4MIS}
	\ThmMainQuad
\end{theorem}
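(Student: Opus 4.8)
The plan is to prove Theorem~\ref{thm:3/4MIS} via a reduction from a multi-party set-disjointness problem, following the general framework the paper advertises. First I would fix a number of players $t$ (thought of as a small constant, or slowly growing) and consider the $t$-party version of $\Disj$ on inputs of size $k$, where player $i$ holds a subset $S_i \subseteq [k]$; the players must decide whether $\bigcap_i S_i = \emptyset$. The standard bound here is that the total communication required is $\Omega(k/t)$ or $\Omega(k)$ (depending on the exact model of shared/private randomness and the blackboard vs. message-passing convention), and I would pin down the precise variant that gives the cleanest $\Omega(k)$-type bound with enough players. The target is to encode an instance of $t$-party $\Disj$ into a CONGEST graph $G$ on $n$ nodes such that the \emph{value} of the maximum independent set in $G$ jumps between two values whose ratio is strictly below $3/4$ depending on whether the intersection is empty — so that a $(3/4+\epsilon)$-approximation algorithm distinguishes the two cases.

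The key construction step is a gadget that simulates the two-party ``bit-gadget'' constructions of Bachrach et al.\ but with $t$ parts rather than $2$. Concretely I would build, for each coordinate $j \in [k]$, a cluster of nodes that can be simultaneously independent only if \emph{no} player selected $j$ (i.e., $j \notin \bigcap_i S_i$ is the ``good'' case); and I would arrange the global graph as a product / blow-up structure so that the MIS value is, up to lower-order terms, $\ell \cdot (\text{number of ``satisfied'' coordinates}) + (\text{base contribution})$ for an appropriate padding parameter $\ell$. The arithmetic should be tuned so that in the all-disjoint case the MIS has value $\sim a$, and when the intersection is nonempty at even one coordinate the value drops to $\le (3/4)a$ (or I'd amplify: force a $1/t$ fraction of coordinates to be bad, so the gap is a clean constant). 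The $t$-partite structure is what buys the stronger $3/4$ ratio compared to the two-party $7/8$: with $t$ parts, a single bad coordinate kills $t$ nodes' worth of independence rather than $2$.

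Next comes the communication simulation. I would partition $V(G)$ into $t$ sets $V_1,\dots,V_t$, with the edges internal to $V_i$ (and the incident half-edges) determined entirely by $S_i$, and a small "core" of edges — of total cut size $O(t^2 \log n)$ or so (matching the $\linearnum$-type bookkeeping the macros hint at) — crossing between parts. Player $i$ simulates all nodes in $V_i$. Running a CONGEST algorithm for $R$ rounds then costs at most $R \cdot (\text{cut size}) \cdot O(\log n)$ bits of total communication among the players, since each message across a cut edge is $O(\log n)$ bits. Combining with the $\Omega(k)$ communication lower bound and choosing $k = \Theta(n)$, $\ell = \Theta(n)$ so that $|V(G)| = \Theta(n)$ and the edge count is $\Theta(n^2)$ yields $R \cdot \Theta(n^2) \cdot O(\log n) = \Omega(n)$... wait — to get the \emph{quadratic} lower bound I instead want the construction where the relevant cut is a \emph{single} edge (or $O(\log n)$ edges) but the graph has $\Theta(n^2/\text{something})$ structure, so that $R \cdot O(\log n) = \Omega(k)$ with $k = \Theta(n^2/\log^2 n)$, giving $R = \Omega(n^2/\log^3 n)$. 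So the precise step is: design the gadget so that all the ``information'' about $S_i$ sits on nodes on one side of a tiny cut, while the coordinate-count $k$ is as large as $\Theta(n^2/\log^2 n)$; this is exactly the trick that distinguishes the linear-lower-bound construction (Theorem~\ref{thm:2MIS}) from the quadratic one.

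The main obstacle I anticipate is the gadget design that simultaneously achieves (i) the clean $3/4$ value gap as a function of a single bad coordinate, (ii) a cut of only polylogarithmic size between the $t$ player-parts so the round lower bound is quadratic, and (iii) a faithful correspondence between independent sets in $G$ and ``solutions'' to the $\Disj$ instance with no spurious large independent sets sneaking in (this soundness direction — showing the MIS value is genuinely \emph{small} in the intersecting case — is typically the delicate part, since one must rule out independent sets that cleverly mix nodes across many gadgets). I would also need to verify that the multi-party $\Disj$ lower bound I invoke is robust to the exact communication pattern induced by the topology (message-passing along a fixed small cut, not a blackboard), possibly citing a known number-in-hand lower bound; handling the $\log n$ factors to land exactly at $\Omega(n^2/\log^3 n)$ rather than a worse power of $\log$ is the final bit of bookkeeping.
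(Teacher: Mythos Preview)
Your proposal has the right high-level shape (multi-party reduction, gap in $\MaxIS$ value, small cut $\Rightarrow$ quadratic round bound), but it is missing several concrete ingredients that the paper relies on, and at least one of them is essential.

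First, and most importantly, you reduce from ordinary $t$-party set-disjointness. The paper explicitly identifies this as the obstacle: in the non-intersecting case of standard $\Disj$ there are exponentially many sub-cases of \emph{pairwise} intersections among the $x^i$, and each such pattern can change the $\MaxIS$ value in the gadget. Your sketch (``$\MaxIS \approx \ell \cdot (\text{number of satisfied coordinates})$'') is exactly the kind of construction that would be sensitive to these sub-cases, so the ``soundness'' direction you flag as delicate would in fact fail. The paper sidesteps this by reducing from the \emph{promise pairwise disjointness} problem of Chakrabarti et al., where the inputs are guaranteed to be either uniquely intersecting or pairwise disjoint; this collapses the bad case to a single clean configuration and still has communication complexity $\Omega(k/(t\log t))$.

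Second, you do not describe any mechanism that simultaneously gives a $3/4$ gap and a polylogarithmic cut. In the paper this is done with an error-correcting code gadget: each clique vertex $v_m$ is wired to a Reed--Solomon codeword in a grid of $(\ell+\alpha)$ tiny cliques, so that codewords of different $m$'s collide on at least $\ell$ positions (Property~\ref{property:largedistance}). This is what forces the independent set to be small when the players' selected indices are distinct, while keeping all cross-player edges inside the code gadget of size $O(\log^2 k)$. Your ``cluster per coordinate $j$'' picture does not give this.

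Third, the mechanism for going from linear to quadratic is not ``make the cut a single edge with $k=\Theta(n^2/\log^2 n)$ coordinates.'' The paper instead takes \emph{two copies} $G^1,G^2$ of the linear construction and, for each player $i$, adds edges between $A^{(i,1)}$ and $A^{(i,2)}$ indexed by pairs $(m_1,m_2)\in[k]^2$; these edges are entirely internal to $V^i$, so the cut stays $\Theta(\log^2 k)$ while the string length becomes $k^2$ with $k=\Theta(n)$. The gap then becomes roughly $4t\ell$ versus $3(t+1)\ell$, and choosing $t\approx 1/\epsilon$ yields the $(3/4+\epsilon)$ ratio. Plugging $k^2=\Theta(n^2)$, cut size $\Theta(\log^2 n)$, and the $\Omega(k^2/(t\log t))$ bound into the simulation gives $\Omega(n^2/\log^3 n)$ rounds directly; no delicate message-passing-vs.-blackboard issue arises because the reduction is already stated in the shared-blackboard model.
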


While our results are not necessarily tight, we hope that our technique could pave the way for more and
stronger lower bounds in the CONGEST model. One important property of our technique is that it doesn’t suffer from the same limitations as the two-party framework, on which we elaborate later. We note our results hold even against randomized algorithms that succeed with probability
$p\geq 2/3$, and even for constant diameter graphs. The hard instances that are used to prove Theorems~\ref{thm:2MIS}
and~\ref{thm:3/4MIS} are weighted graphs, but we can extend our arguments for unweighted graphs as well by losing a logarithmic factor in the lower bounds (in terms of the number of rounds), as explained in Remark~\ref{remarkUnweighted}.

To prove Theorems~\ref{thm:2MIS}
and~\ref{thm:3/4MIS} we use reductions to $t$-party communication complexity where we use $t=O(1/\epsilon)$ players. For $t=2$, our constructions are similar to the ones presented in~\cite{BachrachCDELP19}, and can be viewed as simplified versions of them. While we also get a minor improvement in terms of the number of rounds (and not only in terms of approximations) it is worth pointing out that the improvements we get in terms of the number of rounds are not artifacts of the multi-party construction, but rather are artifacts of the simplifications compared to~\cite{BachrachCDELP19}.

~\\\noindent\textbf{The two-party reduction technique in a nutshell:} The vast majority of these reductions rely on the high communication complexity of the two-party set-disjointness problem. In the set-disjointness problem, there are two players, Alice and Bob, who receive two input strings $x,y\in\{0,1\}^k$, where Alice receives $x$, Bob receives $y$, and they wish to know if their strings intersect. That is, they wish to know if there is an index $i\in [k]$, such that $x_i=y_i=1$. It is well known that Alice and Bob must exchange $\Omega(k)$ bits in order to solve set-disjointness~\cite{KushilevitzN:book96,Razborov92}. A reduction to the set-disjointness problem is usually applied in the following manner. Assume that we want to prove a lower bound in the CONGEST model for deciding whether an input graph has some property $P$ (for example, $P$ can be the property of having a diameter less than $5$). Alice and Bob, before they start the protocol for set-disjointness, decide on a fixed graph construction $G=(V,E)$, and a partition of $G$ into two graphs $G_A=(V_A,E_A)$ and $G_B=(V_B,E_B)$, where $G_A$ is owned by Alice, and $G_B$ is owned by Bob. Then, each of the players adds some edges to their own graph based on their input string, where Alice adds edges based on $x$, Bob adds edges based on $y$, and the set of cut edges between $G_A$ and $G_B$ is fixed and doesn't depend on the input strings. The construction is defined in a way such that the strings $x$ and $y$ are disjoint if and only if the graph, with the additional edges based on the input strings, has the property $P$. Hence, in order to find whether $x$ and $y$ are disjoint, Alice and Bob can simulate a CONGEST algorithm that checks whether an input graph has the property $P$. Simulating the CONGEST algorithm in the two-party communication complexity model is done as follows. For messages that are sent on edges in $G_A$, Alice can simulate these messages without any communication with Bob. Similarly, Bob can simulate the messages that are sent on edges in $G_B$, without any communication with Alice. For the other messages, the ones that are sent on edges in the cut between $G_A$ and $G_B$, Alice and Bob exchange messages: if there is a message from a node in $G_A$ to a node in $G_B$, then Alice sends this message to Bob, and vice-versa. The conclusion that is made in such reductions is that if there is an $r$-round algorithm for deciding $P$ in the CONGEST model, and if the number of edges on the cut betwen $G_A$ and $G_B$ is at most $c$, then there is a protocol for solving two-party set-disjointness that uses $O(r\cdot c\cdot \log n)$ bits. This is because on each edge on the cut, the CONGEST algorithm sends $O(\log n)$ bits in each of the $r$ rounds. Hence, since the communication complexity of set-disjointness is $\Omega(k)$, we get a lower bound of $r=\Omega(\frac{k}{c\log n})$ rounds for any algorithm for deciding $P$ in the CONGEST model. That is, \emph{the smaller the cut, the stronger the lower bound}.

Censor-Hillel at al.\cite{Censor-HillelKP17} showed a small-cut two-party construction to prove a lower bound for maximum independent set, where they show that any algorithm for finding or computing the optimal value of a maximum independent set must spend $\Omega(n^2/\log^2 n)$ rounds. An independent set in a graph is a subset of the nodes where no two nodes in the subset are neighbors. A maximum independent set in a (possibly weighted) graph is an independent set of maximum total weight, where by total we mean the sum of the weights of the nodes in the independent set. Independent sets play vital role in theoretical and practical computer science, and the problem of computing exact or approximate maximum independent set has been attracting attention recently in the CONGEST model~\cite{BachrachCDELP19,Censor-HillelKP17,abs-1906-11524,BYCHGS17}. However, in terms of upper bounds, we are still unable to find fast algorithms that achieve approximation factors better than  $\Delta$, where $\Delta$ is the maximum degree of a node in the graph. If one is happy with a $\Delta$-approximation, or a $(1+\epsilon)\Delta$-approximation, then very fast and even sub-logarithmic algorithms exist~\cite{BYCHGS17,abs-1906-11524}. In terms of lower bounds, recently Bachrach et al.~\cite{BachrachCDELP19} built on the small-cut construction of~\cite{Censor-HillelKP17}, together with a very clever use of error-correcting codes, to prove near-linear hardness for $(5/6+\epsilon)$-approximation, and near-quadratic hardness for $(7/8+\epsilon)$-approximation.

~\\\noindent{\textbf{Limitations of the two-party framework:}} As pointed out by~\cite{BachrachCDELP19}, the two-party framework suffers from some limitations. Especially, but not only, when trying to use it to prove hardness of approximation. For example, for the maximum independent set problem, the two-party framework cannot show any lower bound against algorithms that achieve $(1/2)$-approximation. This is because Alice and Bob can compute the optimal solutions for maximum independent set in the graphs $G_A$ and $G_B$, without any communication, where Alice finds the optimal solution in $G_A$, and Bob finds the optimal solution in $G_B$. The maximum of the two values is always at least half of the optimal solution for $G$. Hence, by just exchanging the two values, which takes $O(\log n)$ bits of communication, Alice and Bob can find a $(1/2)$-approximation for maximum independent set. Since $O(\log n)$ bits can be sent in one round in the CONGEST model, no lower bound for this approximation factor can be shown
by  using the two-party framework. Similarly, the two party framework suffers from a limitation when trying to show a lower bound for $(3/2)$-approximation to minimum vertex cover, where the argument for vertex-cover is not trivial and was proved also  in~\cite{BachrachCDELP19}.

By using more players, the framework  doesn't suffer from the same limitations as in the two-party case. For example, with respect to approximating maximum independent set, the argument above translates only to a limitation of showing a  $(1/t)$-approximation. Hence, the more players we use, the less restrictive the limitations we get.

~\\\noindent{\textbf{The Challenge:}} Perhaps the first attempt that one would try in order to extend the two-party framework to the multi-party case is to use a reduction to the multi-party set-disjointness problem. In the multi-party set disjointness problem, there are $t$ players $p_1,\cdots, p_t$. Each receives a string $x^i\in \{0,1\}^k$, and they wish to know if the strings all intersect on the same index. That is, they wish to know if there is an index $m\in [k]$ satisfying $x^1_m=x^2_m=\cdots =x^t_m=1$. However, using a reduction to the multi-party set-disjointness problem is not a simple task, and as $t$ gets larger, the task becomes more challenging. This is because in the non-intersecting case, there are many sub-cases of pairwise intersections, and the reduction needs to take into account all these sub-cases. For example, if we try to extend the reduction of~\cite{BachrachCDELP19} to the multi-party set-disjointness problem, in the non-intersecting case, for every pair $i\neq j\in [t]$, whether the strings $x^i$ and $x^j$ are intersecting or not influences the size (or weight) of the maximum independent set. Hence, for the non-intersecting case, the reduction needs to take into account all the sub-cases of pairwise intersections, and, the more players we have, the more sub-cases we get, and the more infeasible the reduction becomes. 

In order to overcome this challenge, we use reductions to a certain \emph{promise pairwise disjointness} problem, rather than the  multi-party set-disjointness problem. In this promise pairwise disjointness problem, there are $t$ players each receiving a string $x^i\in \{0,1\}^k$, with the promise that the strings are either all intersecting in the same index, or pairwise disjoint. That is, in the non-intersecting case, for all pairs $i\neq j\in [t]$, it holds that $x^i$ and $x^j$ are disjoint. Most importantly, we don't have many sub-cases of pairwise intersections in the non-intersecting case. The communication complexity of this promise pairwise disjointness problem is $\Omega(k/t\log t)$~\cite{ChakrabartiKS03}, which is large enough for our needs, and we are able to use it to prove our results.

~\\\noindent{\textbf{Road-map:}} In Section~\ref{sec:prelims}, we begin with some useful definitions and tools. In Section~\ref{sec:framework}, we present our framework of reductions to the multi-party communication complexity model. The technical heart of the paper is provided in Sections~\ref{sec:linear} and~\ref{sec:-quadrlowerbound}, where we show our linear and quadratic lower bounds, respectively.

\section{Preliminaries}\label{sec:prelims}

\subsection{Multi-party Communication Complexity} 

Our lower bounds rely on reductions to the number-in-hand model of multi-party communication complexity. In the number-in-hand model, there are $t$ players, each is holding an input $x^i\in \{0,1\}^k$, and they wish to compute a joint function of their inputs $f(x^1,\cdots, x^t)$, where $t$ and $k$ are parameters of the model. The communication setting in the number-in-hand model can be defined in various ways. In this work we use the shared blackboard model (see also, for example,\cite{PhillipsVZ12}), where the players can exchange messages by writing them on a shared blackboard that is visible to all the players. The communication complexity in this model is formally defined as follows.

\begin{definition}\textbf{[Communication Complexity - Shared Blackboard]}\\
	Let $k\geq 1$, $t\geq 2$ be two integers, $f$ be a Boolean function  $f:\prod_{i=1}^t\{0,1\}^k\rightarrow \{\true,\false\}$, and $\mathcal{Q}$ be the family of protocols that compute $f$ correctly with probability at least $2/3$, in the shared blackboard model. Given $t$ inputs $x^1,\cdots, x^t$, denote by $\pi_{Q}(x^1,\cdots, x^t)$
	the transcript of a protocol $Q$ on the inputs $x^1,\cdots, x^t$, i.e. the sequence of
	bits that are written on the shared blackboard. The cost of a protocol $Q$ is
	\begin{align*}
	Cost(Q)=\max_{x^1,\cdots, x^t\in \{0,1\}^k }|\pi_{Q}(x^1,\cdots, x^t)|
	\end{align*}
	
	The communication complexity of $f$, denoted by $CC_f(k,t)$, is defined to be the minimum cost over all the possible protocols that compute $f$ correctly with probability at least $2/3$:
	\begin{align*}
	CC_f(k,t)=\min_{Q\in \mathcal{Q}}Cost(Q)
	\end{align*}
\end{definition}

Our lower bounds for the CONGEST model are achieved via reductions to the \emph{promise pairwise disjointness} function. For two strings $x,y\in \{0,1\}^k$, we say that $x$ and $y$ are disjoint if $\sum_{j=1}^k x_jy_j=0$. 

\begin{definition}\label{def:PairDisj}\textbf{[Promise Pairwise Disjointness]}\\
	Let $k\geq 1$, $t\geq 2$, and  $x^1,\cdots, x^t\in \{0,1\}^k$, with the promise that the strings $x^1,\cdots, x^t$ are either \emph{uniquely intersecting}, or pairwise disjoint. That is, either there is an $m\in [k]$ satisfying $x^1_m=x^2_m=\cdots=x^t_m=1$, or $x^i$ and $x^j$ are disjoint for all pairs $i\neq j\in [t]$. The promise pairwise disjointness function outputs $\true$ if the strings are pairwise disjoint, and $\false$ if they are uniquely intersecting\footnote{Throughout the paper, for any positive integer $k$, we denote by $[k]$ the set of positive integers $\{1,2,\cdots, k\}$.}.
\end{definition}

Chakrabarti et al.\cite{ChakrabartiKS03} proved that the communication complexity of the promise pairwise disjointness function in the shared blackboard model is $\Omega(k/t\log t)$.

\begin{theorem}\label{thm:PairDisj}\textbf[Theorem 2.5 in\cite{ChakrabartiKS03}]\\
	Let $f$ be the promise pairwise disjointness function. It holds that $CC_f(k,t)=\Omega(k/t\log t)$.
\end{theorem}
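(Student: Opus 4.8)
The statement is due to Chakrabarti, Khot and Sun~\cite{ChakrabartiKS03}, so the plan is to reconstruct their information-complexity argument. The overall idea is to lower bound the communication by an \emph{information cost} against a carefully chosen hard distribution, and then reduce the analysis of the $k$-coordinate problem to that of a single coordinate by a direct-sum argument. First I would fix a randomized blackboard protocol $Q$ of error at most $1/3$ and design a \emph{collapsing} input distribution: independently for each coordinate $j\in[k]$, pick an auxiliary index $D_j\in[t]$ uniformly, let $x^{D_j}_j$ be a uniform bit, and set $x^i_j=0$ for all $i\neq D_j$. Every draw is pairwise disjoint, so it meets the promise of Definition~\ref{def:PairDisj} and the correct answer is always $\true$; thus the protocol is constant on the support, but its transcript can still leak information about the input, and one charges the communication for that leakage: writing $\pi_Q$ for the (random) transcript, $Cost(Q)\ge I(X^1,\dots,X^t;\pi_Q\mid D_1,\dots,D_k)$.

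Next I would invoke the direct sum. Because the input distribution is a product over the $k$ coordinates, superadditivity of conditional mutual information for blackboard protocols gives
\begin{align*}
I(X^1,\dots,X^t;\pi_Q\mid D_1,\dots,D_k)\ \ge\ \sum_{j=1}^k I(X^1_j,\dots,X^t_j;\pi_Q\mid D_j),
\end{align*}
where in the $j$-th term the other coordinates are sampled independently from the single-coordinate marginal. So it suffices to show that each single-coordinate term is $\Omega\!\big(1/(t\log t)\big)$; chaining this with the previous bound yields $Cost(Q)=\Omega\!\big(k/(t\log t)\big)$, i.e.\ Theorem~\ref{thm:PairDisj}.

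For the single-coordinate bound, the inputs that matter are $\vec 0$, the $t$ unit vectors $e_1,\dots,e_t$, and the all-ones vector $\vec 1$; write $\Pi(z)$ for the distribution of $\pi_Q$ when the $j$-th coordinate equals $z$ and the remaining coordinates are drawn from the single-coordinate marginal. Correctness on $\false$-instances has to be used somewhere, and it enters by embedding $\vec 1$ into coordinate $j$ while keeping the rest pairwise disjoint: this produces a uniquely-intersecting instance, forcing the output to flip, so $h(\Pi(\vec 0),\Pi(\vec 1))=\Omega(1)$ in Hellinger distance (by a standard averaging argument over the fixing of the other coordinates). On the other hand, since conditioned on $D_j=i$ the input is $\vec 0$ or $e_i$ according to a uniform bit, the standard inequality relating mutual information to squared Hellinger distance gives that the single-coordinate term is at least a constant times $\tfrac1t\sum_{i\in[t]} h^2(\Pi(\vec 0),\Pi(e_i))$. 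Hence everything reduces to proving $\sum_{i\in[t]} h^2(\Pi(\vec 0),\Pi(e_i))=\Omega(1/\log t)$, which I would extract from the \emph{cut-and-paste} property of number-in-hand blackboard protocols --- a transcript is consistent precisely with a combinatorial box $\prod_i S_i$ of inputs --- together with the recursive argument of~\cite{ChakrabartiKS03}.

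The hardest part is exactly this last inequality, and it is the reason the bound is $k/(t\log t)$ rather than the easy $k/t^2$. A naive approach walks from $\vec 0$ to $\vec 1$ one player at a time along $\vec 0, e_1, e_1+e_2,\dots,\vec 1$ (these intermediate inputs need not satisfy the promise --- only the protocol's transcript distributions on them are used) and combines the $t$ hops by the triangle inequality for Hellinger distance together with Cauchy--Schwarz; this loses a factor $t$ and only gives $\sum_i h^2(\Pi(\vec 0),\Pi(e_i))=\Omega(1/t)$, hence the weaker $\Omega(k/t^2)$. Getting down to a $\log t$ loss is the technical heart of~\cite{ChakrabartiKS03}: rather than peeling players off one by one, one splits the player set into two halves, compares the transcript distributions across the split with a two-party-style Hellinger estimate, and recurses, so that only $O(\log t)$ ``doublings'' separate the global gap $h^2(\Pi(\vec 0),\Pi(\vec 1))=\Omega(1)$ from the per-player quantities $h^2(\Pi(\vec 0),\Pi(e_i))$. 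Since the present paper invokes Theorem~\ref{thm:PairDisj} only as a black box, I would cite~\cite{ChakrabartiKS03} for the full execution of this final step rather than redo it here.
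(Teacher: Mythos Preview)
The paper does not prove Theorem~\ref{thm:PairDisj} at all: it is stated purely as a quotation of Theorem~2.5 in~\cite{ChakrabartiKS03} and used thereafter as a black box. So there is nothing to compare your argument to in the paper itself.

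That said, your sketch is a faithful outline of the original Chakrabarti--Khot--Sun proof: the collapsing product distribution, the information-cost lower bound, the direct-sum reduction to a single coordinate, and the Hellinger/cut-and-paste analysis culminating in the recursive halving that improves the naive $\Omega(1/t)$ per-coordinate bound to $\Omega(1/\log t)$. Since the present paper only needs the statement, your final remark --- citing~\cite{ChakrabartiKS03} for the execution of the last step rather than redoing it --- is exactly what the paper does, except that the paper cites~\cite{ChakrabartiKS03} for the \emph{entire} theorem.
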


\subsection{Large Distance Codes}

Our proofs use the tool of \emph{error-correcting codes} that was used in~\cite{BachrachCDELP19}. Let us define the notion of a \emph{code-mapping}. Here, we use a similar definition to the one given by Arora and Barak~\cite{AroraBarak} (Chapter 19, Definition 19.5, page 380, in~\cite{AroraBarak}).

\begin{definition}\label{def:ECC}\textbf{[Code-mapping]}\\
	Let $\Sigma$ be a finite set of symbols, called the alphabet. Fix three integers $d\geq 1$, $L\geq 1$ and $M\geq L$. For two strings $x,y\in \Sigma^M$, the distance of $x$ and $y$, denoted by $d(x,y)$, is equal to $|\{i\in [M]\mid x_i\neq y_i\}|$.
	
	A code-mapping with parameters $(L,M,d,\Sigma)$ is a function $\Co:\Sigma^{L}\to \Sigma^M$, such that for every $x\neq y\in \Sigma^{L}$, $d(\Co(x),\Co(y))\geq d$.
\end{definition}

Our proofs use the following Theorem that shows the existence of large-distance codes (Lemma 19.11 in ~\cite{AroraBarak}).

\begin{theorem}\label{thm: ECC}
	Let $\Sigma$ be an alphabet of size $q=|\Sigma|$. There is a code-mapping with parameters $(L,M,d,\Sigma)$, where $L\leq M\leq q$ and $d=M-L$.
\end{theorem}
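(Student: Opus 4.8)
The plan is to take $\Co$ to be a Reed--Solomon code. Identifying the alphabet $\Sigma$ with the finite field $\mathbb{F}_q$ (taking $q$ to be a prime power, which is the standard setting for this statement and is all that our later reductions require), the Reed--Solomon construction yields a mapping whose minimum distance is $M-L+1$, i.e.\ one more than the claimed value $d=M-L$, so there is even room to spare. The entire argument rests on a single classical fact: a nonzero univariate polynomial of degree at most $L-1$ over a field has at most $L-1$ roots.

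Concretely, here are the steps. Since $M\le q=|\mathbb{F}_q|$, first fix $M$ pairwise-distinct elements $\alpha_1,\dots,\alpha_M\in\mathbb{F}_q$. To a message $x=(x_0,\dots,x_{L-1})\in\Sigma^{L}=\mathbb{F}_q^{L}$ associate the polynomial $P_x(Z)=\sum_{j=0}^{L-1}x_jZ^{j}$, of degree at most $L-1$, and define
\[
\Co(x)\;=\;\bigl(P_x(\alpha_1),\,P_x(\alpha_2),\,\dots,\,P_x(\alpha_M)\bigr)\;\in\;\mathbb{F}_q^{M}=\Sigma^{M}.
\]
This is a well-defined map $\Sigma^{L}\to\Sigma^{M}$. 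For the distance bound, let $x\ne y$ in $\Sigma^{L}$; then $P_x-P_y$ is a nonzero polynomial of degree at most $L-1$, so it has at most $L-1$ roots in $\mathbb{F}_q$, and hence $P_x(\alpha_i)=P_y(\alpha_i)$ for at most $L-1$ of the indices $i\in[M]$. Therefore $\Co(x)$ and $\Co(y)$ disagree in at least $M-(L-1)=M-L+1\ge M-L=d$ coordinates, that is, $d(\Co(x),\Co(y))\ge d$ (and in particular $\Co$ is injective). Thus $\Co$ is a code-mapping with parameters $(L,M,d,\Sigma)$, as claimed. The boundary case $M=L$ needs nothing extra: there $d=0$, the distance requirement is vacuous, and the same formula still defines a valid mapping.

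I do not anticipate a genuine obstacle: the construction is textbook (Reed--Solomon / Singleton-optimal codes) and everything reduces to polynomial root-counting over a field. The one point worth a word of care is the restriction to prime-power $q$, needed so that a field of size exactly $q$ exists; this is the conventional formulation of the lemma, and since the value of $q$ is ours to choose wherever we invoke this theorem, it is no restriction for us. If one additionally wants $\Co$ to be explicitly and efficiently computable — as the CONGEST simulations later implicitly need — it suffices to note that $P_x$ and all $M$ of its evaluations can be computed in time polynomial in $M$ and $\log q$.
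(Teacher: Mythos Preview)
Your proof is correct and matches the paper's approach exactly: the paper does not give a self-contained proof but simply cites the Reed--Solomon construction (Lemma~19.11 in Arora--Barak) as the way to achieve the stated parameters, and you have spelled out precisely that construction together with the standard polynomial root-counting argument. Your caveat about $q$ being a prime power is appropriate and, as you note, harmless for the applications in this paper where the alphabet size is a parameter we are free to choose.
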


One way to construct a code-mapping that proves Theorem~\ref{thm: ECC} is by the so called \emph{Reed-Solomon} code, which is a well-known algebraic construction for error-correcting codes. In our proofs we don't need the details of the construction, but only its existence. 

\section{Multi-Party Communication Complexity Reductions}\label{sec:framework}

In this section we show how to prove lower bounds for the CONGEST model via reductions to the shared blackboard model of multi-party communication complexity. Our framework extends the framework of~\cite{Censor-HillelKP17} for the 2-party case. In~\cite{Censor-HillelKP17}, the authors define the notion of a \emph{family of lower bound graphs} for the 2-party case. In this work, we extend this notion for any arbitrary number $t\geq 2$ of players.

\begin{definition}\label{def: LBgraphs}\textbf{[Family of Lower Bound Graphs]}\\
	Given two integers $k\geq 1$, $t\geq 2$, a boolean function $f:\prod_{i=1}^t\set{0,1}^{k}\to\set{\true,\false}$, and a graph predicate $P$, a family of graphs $\set{G_{\bar{x}}=(V,E_{\bar{x}},w_{\bar{x}})\mid \bar{x}=(x^1,\cdots, x^t)\in\prod_{i=1}^t \{0,1\}^k}$ is said to be a family of \emph{lower bound graphs with respect to $f$ and $P$} if there is a partition of the set of nodes $V=\dot{\bigcup}_{i=1}^t V^i$ for which the following properties hold:\footnote{Throughout the paper, we use the notation $V=\dot\bigcup_{i=1}^t V^i$ to emphasize that $\{V^i\}_{i\in [t]}$ is a partition of $V$.}
	\begin{enumerate}
		\item \label{ItemInLBGraphs: va}
		Only the weight of the nodes in $V^i$ and the existence of edges in $V^i\times V^i$ may depend on $x^i$;
		\item \label{ItemInLBGraphs: pandf}$G_{\bar{x}}$ satisfies the predicate $P$ iff $f(\bar{x})=\true$.
	\end{enumerate}
\end{definition}

The intuition behind the definition of a family of lower bound graphs is as follows. Given a function $f$ whose input is split among $t$ players $p^1,\cdots, p^t$, where $p^i$ receives a string $x^i\in \{0,1\}^k$, and given a family of lower bound graphs $\set{G_{\bar{x}}=(V,E_{\bar{x}},w_{\bar{x}})\mid \bar{x}\in\prod_{i=1}^t \{0,1\}^k}$ with respect to $f$ and some graph predicate $P$. In order for the players to compute the value $f(x^1,\cdots, x^t)$, they can construct the graph $G_{\bar{x}}$, where $\bar{x}=(x^1,\cdots, x^t)$, and check whether $G_{\bar{x}}$ satisfies the predicate $P$. Due to the first condition of Definition~\ref{def: LBgraphs}, each player $p^i$ can construct the graph induced by the nodes in $V^i$ without any communication with the other players. Due to the second condition of Definition~\ref{def: LBgraphs}, $G_{\bar{x}}$ satisfies the predicate $P$ if and only if $f(x^1,\cdots, x^t)=\true$. Hence, the problem of  deciding whether $G_{\bar{x}}$ satisfies $P$ is reduced to computing the value $f(x^1,\cdots, x^t)$.

Next, we prove the following reduction theorem, which is based on a standard simulation argument. This theorem extends the reduction theorem of~\cite{Censor-HillelKP17}  for the 2-party case (Theorem 1 in~\cite{Censor-HillelKP17}). Given a family of lower bound graphs and a graph $G_{\bar{x}}$ in it, we denote by $cut(G_{\bar{x}})$ the set of \emph{cut edges} of $G_{\bar{x}}$. That is, $cut(G_{\bar{x}})=E_{\bar{x}}\setminus (\bigcup_{i=1}^t V^i\times V^i)$.

\begin{theorem}
	\label{thm: general lb framework}
	Fix $k\geq 1$, $t\geq 2$, $f:\prod_{i=1}^t \{0,1\}^k\to\set{\true,\false}$, and a graph predicate $P$. If there is a family $\set{G_{\bar{x}}=(V,E_{\bar{x}},w_{\bar{x}})}$ of lower bound graphs w.r.t.~$f$ and $P$, then any algorithm for deciding $P$ in the CONGEST model with success probability at least $2/3$ requires $\Omega \left(\frac{CC_f(k,t)}{\size{cut(G_{\bar{x}})}\log |V|}\right)$ rounds.
\end{theorem}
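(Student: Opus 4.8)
The plan is to prove Theorem~\ref{thm: general lb framework} by a standard simulation argument: show that an $r$-round \cgst{} algorithm $\mathcal{A}$ for deciding $P$ can be turned into a shared-blackboard protocol $Q$ for $f$ whose cost is $O(r \cdot \size{cut(G_{\bar{x}})} \cdot \log\size{V})$; combined with the lower bound $Cost(Q) \geq CC_f(k,t)$, this forces $r = \Omega\!\left(\frac{CC_f(k,t)}{\size{cut(G_{\bar{x}})}\log\size{V}}\right)$.

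First I would set up the protocol. Given inputs $x^1,\dots,x^t$, player $p^i$ locally constructs the subgraph of $G_{\bar{x}}$ induced on $V^i$; by Property~\ref{ItemInLBGraphs: va} of Definition~\ref{def: LBgraphs} this requires no communication, since only weights of $V^i$-nodes and edges inside $V^i \times V^i$ depend on $x^i$. The vertex set $V$, the partition $\{V^i\}$, and the cut edges $cut(G_{\bar{x}})$ are all fixed and known to every player in advance (they do not depend on $\bar{x}$), so each player also knows, for every cut edge, which of its endpoints it ``owns''. The players then simulate $\mathcal{A}$ on $G_{\bar{x}}$ round by round: since each player knows the full local state (including weights and incident edges) of every node in $V^i$, it can compute all messages those nodes would send in the next round. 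The only messages a player cannot generate alone are those traveling along a cut edge from a node in $V^j$ ($j \neq i$) into a node in $V^i$. For each such message, the owner of the sending endpoint writes it on the blackboard, labeled by the edge, in that simulated round; then the owner of the receiving endpoint reads it and continues the simulation. After $r$ rounds each node knows $\mathcal{A}$'s output, so the player owning it learns whether $G_{\bar{x}}$ satisfies $P$, which by Property~\ref{ItemInLBGraphs: pandf} equals $f(\bar{x})$; one final bit on the blackboard broadcasts the answer.

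Next I would bound the cost. In each of the $r$ rounds, at most $2\size{cut(G_{\bar{x}})}$ messages cross the cut (one per endpoint per cut edge), and each \cgst{} message is $O(\log\size{V})$ bits; the edge labels cost another $O(\log\size{V})$ bits each. Hence $Cost(Q) = O(r\cdot\size{cut(G_{\bar{x}})}\cdot\log\size{V})$. Since $\mathcal{A}$ succeeds with probability at least $2/3$, so does $Q$, so $Q \in \mathcal{Q}$ and therefore $Cost(Q) \geq CC_f(k,t)$. Rearranging yields the claimed bound on $r$.

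The main thing to be careful about — rather than a deep obstacle — is the simulation bookkeeping: one must argue by induction on the round number that at the start of round $\rho$ every player knows the complete state of all its owned nodes, which hinges precisely on the fact that a node's behavior depends only on its own state plus the messages received on incident edges, and that the intra-$V^i$ messages are computable by $p^i$ while the cross-cut ones are exactly the ones placed on the blackboard. I would also note explicitly that the construction is oblivious to whether the players are randomized — shared or private randomness is handled identically, the transcript simply being the sequence of blackboard bits — so the statement holds against randomized \cgst{} algorithms with success probability $2/3$, matching the definition of $CC_f(k,t)$.
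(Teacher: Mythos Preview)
Your proposal is correct and follows essentially the same simulation argument as the paper: each player locally builds its part of $G_{\bar{x}}$, the players simulate the \cgst{} algorithm round by round while writing only the cut-edge messages on the shared blackboard, and the resulting cost bound $O(r\cdot|cut(G_{\bar{x}})|\cdot\log|V|)\geq CC_f(k,t)$ yields the claimed lower bound on $r$. Your extra remarks on bookkeeping and randomness are fine elaborations but do not depart from the paper's approach.
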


\begin{proof}
	Let $ALG$ be a distributed algorithm in the CONGEST model that decides $P$ in $T$ rounds. We define a protocol for $f$ in the shared blackboard model, as follows. Let $\bar{x}=(x^1,\cdots, x^t)\in \prod_{i=1}^t\{0,1\}^k$ be the vector of inputs of the players $p^1,\cdots, p^t$, where $p^i$ receives the string $x^i$, in the shared blackboard model. Each player $p^i$ constructs the part of $G_{\bar{x}}$ for the nodes in $V^i$. This can be done by the first condition of Definition~\ref{def: LBgraphs}, and the fact that the $V^i$'s are disjoint. 
	
	The players $p^1,\cdots, p^t$ simulate $ALG$, where each player $p^i$ simulates the nodes in $V^i$, as follows. All the messages that are sent on edges in $V^i\times V^i$ are simulated by player $p^i$, without any communication with the other players. All the other messages, the ones that are sent on edges in $cut(G_{\bar{x}})=E_{\bar{x}}\setminus \left(\bigcup_{i=1}^t V^i\times V^i\right)$, are written on the shared blackboard. That is, whenever there is a message from some node in $V^i$ to some node in $V^j$ for $i\neq j\in [t]$, player $p^i$ writes this message on the shared blackboard, which is visible to all the other players. In particular, it is visible to $p^j$ who is simulating the nodes in $V^j$.
	
	After simulating the $T$ rounds of $ALG$, the players know whether $G_{\bar{x}}$ satisfies the predicate $P$, and by the second condition of Definition~\ref{def: LBgraphs}, this reveals the information about $f(\bar{x})$. Observe that the total number of bits that are written on the blackboard are $O(T|cut(G_{\bar{x}})|\log |V|)$. This is because an algorithm in the CONGEST model sends at most $O(\log |V|)$ bits on each edge in each round, and the only messages that are written on the blackboard are the ones that are sent on the edges in $cut(G_{\bar{x}})$. Hence, the communication complexity of $f$ is at most $O(T|cut(G_{\bar{x}})|\log |V|)$ and therefore, $T=\Omega \left(\frac{CC_f(k,t)}{\size{cut(G_{\bar{x}})}\log |V|}\right)$.
\end{proof}

Our hardness results use families of lower bound graphs with respect to the promise pairwise disjointness function and a gap predicate $P$. We formalize such families in Definition~\ref{def:LBG}. First, we formally define the notion of $\gamma$-approximation for maximum independent set.

\begin{definition}\textbf{[$\gamma$-approximation for maximum independent set]}\\
	Let $G=(V,E,w)$ be a vertex-weighted graph with weight function $w$, and let $OPT$ be the value of an optimal solution for maximum independent set.\footnote{Throughout the paper, for a subset of nodes $U\subseteq V$, we denote by $w(U)=\sum_{v\in U} w(v)$.} An independent set $I$ in $G$ is $\gamma$-approximation for maximum independent set if $w(I)\geq OPT/\gamma$.
\end{definition}

\begin{definition}\label{def:LBG}\textbf{[$\gamma$-approximate MaxIS family of lower bound graphs]}\\
	Fix $0\leq\gamma\leq 1, \beta>0$. Let $P$ be a graph predicate that distinguishes between graphs of maximum independent set of weight at least $\beta$, and graphs of maximum independent set of weight at most $\gamma\cdot \beta$. A family of graphs is called a $\gamma$-approximate MaxIS if it is a family of lower bound graphs with respect to the promise pairwise disjointness function and the graph predicate $P$.
\end{definition}

The following corollary follows from Theorems~\ref{thm:PairDisj} and~\ref{thm: general lb framework}.

\begin{corollary}\label{cor: hardness}
	Let $k\geq 1$, $t\geq 2$ be two integers. If there is a $\gamma$-approximate MaxIS family of graphs $\set{G_{\bar{x}}=(V,E_{\bar{x}},w_{\bar{x}})\mid \bar{x}\in\prod_{i=1}^t \{0,1\}^k}$, then any algorithm for $\gamma$-approximation of maximum independent set in the CONGEST model with success probability at least $2/3$ requires $\Omega (k/(t\log t\cdot \size{cut(G_{\bar{x}})}\log |V|))$ rounds.
\end{corollary}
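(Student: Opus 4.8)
The plan is to derive Corollary~\ref{cor: hardness} by simply chaining the two results we already have: the reduction theorem (Theorem~\ref{thm: general lb framework}) converts a family of lower bound graphs into a round lower bound expressed in terms of $CC_f(k,t)$, and Theorem~\ref{thm:PairDisj} gives a concrete lower bound $\Omega(k/(t\log t))$ on $CC_f(k,t)$ when $f$ is the promise pairwise disjointness function. Since a $\gamma$-approximate MaxIS family of lower bound graphs is, by Definition~\ref{def:LBG}, precisely a family of lower bound graphs with respect to the promise pairwise disjointness function $f$ and the gap predicate $P$ of that definition, both hypotheses are in place.

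Concretely, I would proceed as follows. First, let $f$ denote the promise pairwise disjointness function on $t$ strings of length $k$, and let $P$ be the gap predicate from Definition~\ref{def:LBG}: it accepts graphs whose maximum independent set has weight at least $\beta$ and rejects those whose maximum independent set has weight at most $\gamma\beta$. The key observation is that any algorithm that computes a $\gamma$-approximation for maximum independent set can be used to decide $P$: run the approximation algorithm, and if the returned independent set has weight at least $\gamma\beta$, declare that the optimum is at least $\beta$ (in the "at least $\beta$" case the returned set has weight at least $\beta/\gamma \cdot \gamma = \beta \geq \gamma\beta$, while in the "at most $\gamma\beta$" case no independent set, hence no returned set, can have weight exceeding $\gamma\beta$ — so the two cases are distinguished, up to boundary conventions on the inequalities which are harmless). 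Thus a $T$-round $\gamma$-approximation algorithm yields a $T$-round algorithm deciding $P$ with the same success probability.

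Next, apply Theorem~\ref{thm: general lb framework} with this $f$ and $P$: since the hypothesized family $\set{G_{\bar{x}}}$ is a family of lower bound graphs with respect to $f$ and $P$, any algorithm deciding $P$ in CONGEST with success probability at least $2/3$ requires $\Omega\!\left(CC_f(k,t)/(\size{cut(G_{\bar{x}})}\log|V|)\right)$ rounds. Finally, substitute the bound $CC_f(k,t)=\Omega(k/(t\log t))$ from Theorem~\ref{thm:PairDisj}, which gives the stated lower bound of $\Omega\!\left(k/(t\log t\cdot\size{cut(G_{\bar{x}})}\log|V|)\right)$ rounds for $\gamma$-approximating maximum independent set.

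I do not anticipate a genuine obstacle here — this is a bookkeeping corollary rather than a theorem with a new idea. The only point that needs a word of care is the reduction from "$\gamma$-approximation" to "deciding $P$": one must check that the gap between $\beta$ and $\gamma\beta$ is exactly what a $\gamma$-approximation algorithm can resolve, which is immediate from the definition $w(I)\geq OPT/\gamma$ once one notes $\gamma \le 1$ so that $OPT/\gamma \ge OPT$, and more to the point that when $OPT \ge \beta$ the algorithm returns weight $\ge \gamma\beta$ whereas when $OPT \le \gamma\beta$ it cannot. Everything else is a direct composition of Theorems~\ref{thm:PairDisj} and~\ref{thm: general lb framework}, as claimed in the statement.
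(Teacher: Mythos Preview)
Your proposal is correct and matches the paper's approach exactly: the paper does not even spell out a proof, simply stating that the corollary ``follows from Theorems~\ref{thm:PairDisj} and~\ref{thm: general lb framework},'' which is precisely the composition you carry out. Your extra sentence explaining why a $\gamma$-approximation algorithm decides the gap predicate $P$ is a reasonable elaboration that the paper leaves implicit.
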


\section{Linear Lower Bound}\label{sec:linear}

In this section we prove the following theorem.
\begin{theorem-repeat}{thm:2MIS}
\ThmMainLi
\end{theorem-repeat}



In order to prove Theorem~\ref{thm:2MIS}, we construct a $(1/2+\epsilon)$-approximate $MaxIS$ family of lower bound graphs $\set{G_{\bar{x}}=(V,E_{\bar{x}},w_{\bar{x}})\mid \bar{x}\in\prod_{i=1}^t \{0,1\}^k}$.

\subsection{The family of lower bound graphs}\label{ssec:-family_linear_fixed}

We start by describing a fixed graph construction $G=(V,E,w)$, and then we describe how to get from $G$ and a vector of strings $\bar{x}\in \prod_{i=1}^t \{0,1\}^k$ the graph $G_{\bar{x}}=(V,E_{\bar{x}},x_{\bar{x}})$, which gives a family of graphs $\set{G_{\bar{x}}=(V,E_{\bar{x}},w_{\bar{x}})\mid \bar{x}\in\prod_{i=1}^t \{0,1\}^k}$.

Our fixed graph construction $G$ contains $t$ copies of a fixed \emph{base graph} $H$. We start by describing the base graph $H$. 


\paragraph{Some notations.} Let $k,\alpha,\ell$ be three positive integers that are to be chosen later such that $(\ell+\alpha)^\alpha=k$, and $\ell\gg\alpha$. Let $\mathcal{C}$ be a code-mapping given by Theorem~\ref{thm: ECC} with parameters $(\alpha,\ell+\alpha,\ell,\Sigma)$, where $\Sigma=\{1,\cdots , \ell+\alpha\}$. Observe that $k= |\Sigma|^{\alpha}$. Hence, we order the elements in $\Sigma^{\alpha}$ by an arbitrary ordering, and for $m\in [k]$, we denote by $\mathcal{C}(m)$ the code-mapping of the $m$'th element in $\Sigma^{\alpha}$. 

\paragraph{Description of $H=(V_H,E_H)$.} The set of nodes $V_H$ contains a clique of size $k$, denoted by $A=\{v_1,...,v_k\}$, and $\ell+\alpha$ cliques, $C_1,\cdots, C_{(\ell+\alpha)}$, each of size $\ell+\alpha$. For each $h\in [\ell+\alpha]$, the nodes in $C_h$ are denoted by $C_h=\{\sigma_{(h,1)},\cdots, \sigma_{(h,\ell+\alpha)}\}$. We call the cliques $C_1,\cdots , C_{\ell+\alpha}$ the \emph{code gadget}, and we denote this set of nodes by $$Code=\bigcup_{h=1}^{\ell+\alpha}C_h$$ The reason that these cliques are called the code-gadget is as follows. Given a code-word $w\in \Sigma^{\ell+\alpha}$, we can represent $w$ by $\ell+\alpha$ nodes $u_1\in C_1,u_2\in C_2,\cdots ,u_{\ell+\alpha}\in C_{\ell+\alpha}$, where $u_h\in C_h$ corresponds to the $h$'th position in $w$. That is, $u_h=\sigma_{(h,w_h)}$, where $w_h$ is the value in the $h$'th position in $w$. For any $m\in [k]$, we denote by $Code_m$ the set of nodes that corresponds to the code-word $\C(m)\in \Sigma^{\ell+\alpha}$, and we connect $v_m\in A$ to all the nodes in $Code\setminus Code_m$. 

This concludes the description of $H$ (see also Figure 1 for an example). More formally, the graph $H=(V_H,E_H)$ is defined as follows. Given a clique $C$, we denote by $E(C)$ the set of all the possible edges between nodes in $C$.
\begin{align*}
&V_H=A\cup Code\\
&E_H=E(A)\cup \{\{v_m,u\}\mid v_m\in A, u\in Code\setminus Code_m\}\bigcup_{h=1}^{\ell+\alpha}E(C_h)
\end{align*}
\begin{figure}                
	\centering
	\includegraphics[width=10cm,height=8cm]{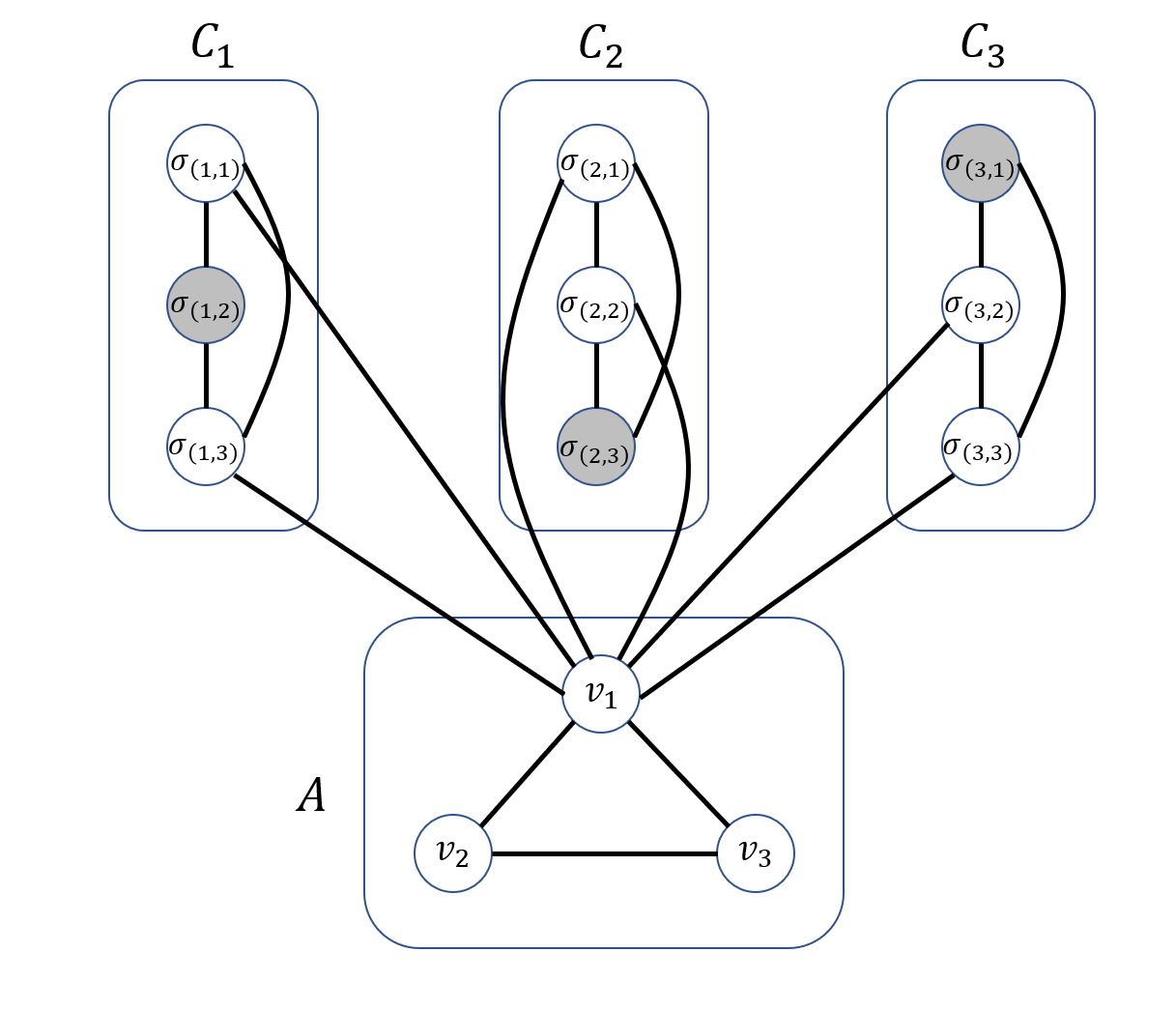}
	\label{fig:H}
	\caption{An example of the base graph $H$, where $\ell=2$, $\alpha=1$. $A$ is a clique of $k=(\ell+\alpha)^{\alpha}=3$ nodes, and there are $\ell+\alpha=3$ cliques $C_1,C_2$, and $C_3$, each of size $3$. In this example, we assume that the code-mapping of $1$,  $\C(1)=``2,3,1"$, and therefore, $v_1$ in connected to all the nodes in $Code=C_1\cup C_2\cup C_3$, except of the nodes in $Code_1= \{\sigma_{(1,2)},\sigma_{(2,3)},\sigma_{(3,1)}\}$. The other edges between $\{v_2,v_3\}\times Code$ are omitted in this figure, for clarity.}
\end{figure}
\paragraph{Obtaining the fixed graph construction $G$ from $H$:}

Now we are ready to describe the fixed graph construction $G=(V,E)$. Let $t\geq 2$. There are $t$ copies of $H$ in $G$, denoted by $H^1,\cdots, H^t$. In order to distinguish between nodes in different $H^i$'s, we add a superscript $i$ for the nodes in $H^i$. That is, for each $i\in [t]$, $H^i=(V^i,E^i)$ contains a clique and a code-gadget, where the clique is denoted by $A^i=\{v^i_1,\cdots, v^i_k\}$, the code-gadget is denoted by $Code^i$, the cliques in the code-gadget are denoted by $C^i_1,\cdots, C^i_{\ell+\alpha}$, and for any $h\in[\ell+\alpha]$, the nodes in $C^i_h$ are denoted by $C^i_h=\{\sigma^i_{(h,1)},\cdots, \sigma^i_{(h,\ell+\alpha)}\}$. Similarly, $Code^i_m$ denotes the set of nodes in $\bigcup_{i=1}^{\ell+\alpha} C^i_h$ that corresponds to the code-word $\mathcal{C}_m$. That is, let $w=\C(m)$, we have that $Code^i_m=\{\sigma^i_{(1,w_1)},\cdots, \sigma^i_{(\ell+\alpha,w_{\ell+\alpha})}\}$.

It remains to describe the connections between $H^i$ and $H^j$, for any $i\neq j\in [t]$. For any $h\in [\ell+\alpha]$, we add all the possible edges between  $C^i_h$ and $C^j_h$ \emph{except of the natural perfect matching between $C^i_h$ and $C^j_h$}, i.e., $\{\{\sigma^i_{(h,r)},\sigma^j_{(h,r)}\}\mid r\in[\ell+\alpha]\}$. More formally, we add the following edges for any $i\neq j\in [t]$ and any $h\in [\ell+\alpha]$,

\begin{align*}
\left\{\{u,v\}\mid u\in C^i_h, v\in C^j_h\right\}\setminus\left\{\{\sigma^i_{(h,r)},\sigma^j_{(h,r)}\}\mid r\in [\ell+\alpha]\right\}
\end{align*}
\begin{figure}                
	\centering
	\includegraphics[width=8cm,height=6cm]{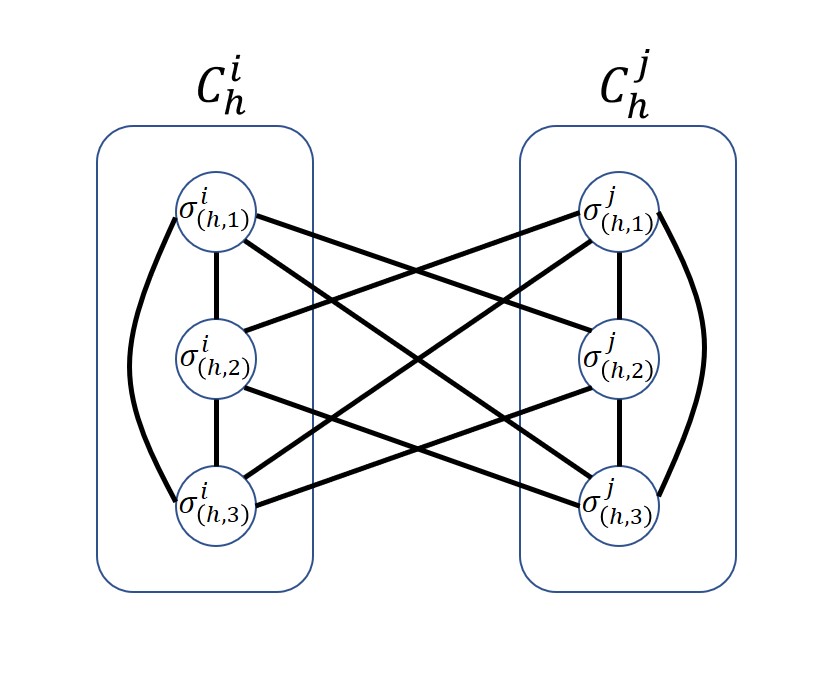}
	\label{fig:Con}
	\caption{An illustration for the connections between $C^i_h$ and $C^j_h$. In this example, $\ell+\alpha=3$. Observe that for any $r\in \{1,2,3\}$, $\sigma^i_{h,r}$ is connected to all the nodes in $C^j_h$ except of $\sigma^j_{h,r}$.}
\end{figure}
See also Figure 2 for an illustration of these connections. This concludes our fixed graph construction $G$, and we proceed to describing $G_{\bar{x}}$.
\begin{figure}
	\hspace*{-3.5cm}                 
	\centering
	\includegraphics[width=21cm,height=15cm]{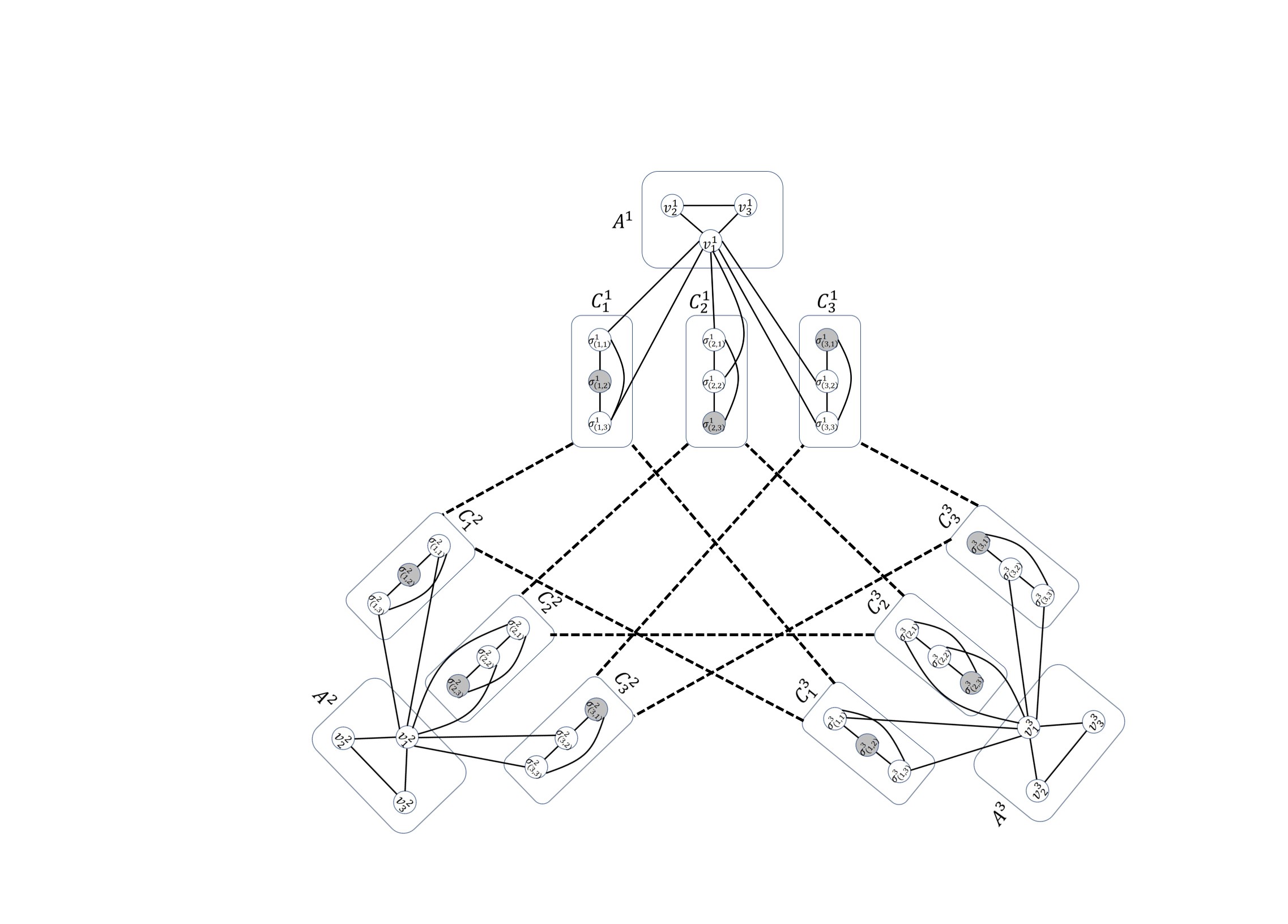}
	\label{fig:multi}
	\caption{Here, we have an illustration for a 3 players  construction, where $\ell=2,\alpha=1$, and $k=3$. Observe that for any $h,i\neq j\in [3]$, $C^i_h$ is connected by a dashed edge to $C^j_h$. This dashed edge represents all the connections between $C^i_h$ and $C^j_h$, as illustrated in Figure 2. Observe that $\{v^1_1,v^2_1,v^3_1\}\cup Code^1_1\cup Code^2_1\cup Code^3_1$ is an independent set.}
	\vspace*{-2cm}
\end{figure}
\paragraph{Obtaining $G_{\bar{x}}$ from $G$ and $\bar{x}$:}
Given $\overline{x}= (x^1,\cdots ,x^t)\in \prod_{i=1}^t \{0,1\}^k$. The graph $G_{\bar{x}}=(V,E,w_{\bar{x}})$ is defined as follows. The sets of nodes and edges of $G_{\bar{x}}$ are exactly as in $G$. The weights of nodes in $G_{\bar{x}}$ are defined as follows. Let $i\in [t]$, $m\in [k]$, and $v^i_m\in A^i$,

$$
w(v^i_m)=
\begin{cases}
\ell & \mbox{if } x^i_m=1\\
1 & \mbox{otherwise}\\
\end{cases}
$$

All the other nodes in $G_{\bar{x}}$ are of weight $1$. That is, for any $u\in V\setminus \bigcup_{i=1}^t A^i$, $w(u)=1$.

This concludes the description of $G_{\bar{x}}$. Before we proceed to proving that $G_{\bar{x}}$ is a family of lower bound graphs, we provide three useful properties of $G_{\bar{x}}$ that are used in the proof. 

\begin{property}\label{property: independence}
	For any $m\in [k]$, it holds that $\left(\bigcup_{i=1}^t  Code^i_m\right)\cup\{v^i_m\mid i\in [t]\}$ is an independent set.
\end{property}
\begin{proof}
	First, observe that the nodes in $\{v^i_m\mid i\in [t]\}$ are independent. This is because $v^i_m\in A^i$, and there are no edges between $A^i$ and $A^j$ for any $i\neq j$. There are also no edges between $A^i$ and $Code^j$, for any $i\neq j$. Furthermore, for any $i\in [t]$ and any $m\in [k]$, it holds that $\{v^i_m\}\cup  Code^i_m$ is an independent set. This is because $v^i_m$ is connected only to the nodes in $Code^i\setminus Code^i_m$. Finally, let $w=\C(m)$ be the code-mapping of $m$. Since for any $i\neq j$, we have that $Code^i_m=\{\sigma^i_{(1,w_1)},\cdots \sigma^{i}_{(\ell+\alpha,w_{\ell+\alpha})}\}$, and $Code^j_m =\{\sigma^j_{(1,w_1)},\cdots \sigma^{j}_{(\ell+\alpha,w_{\ell+\alpha})}\}$, and $\sigma^i_{(h,r)}$ is not connected to $\sigma^j_{(h,r)}$ for any $h,r\in [\ell+\alpha]$, we have that $\bigcup_{i=1}^t Code^i_m$ is an independent set. Hence, the union $\left(\bigcup_{i=1}^t  Code^i_m\right)\cup\{v^i_m\mid i\in [t]\}$ is an independent set. See also Figure 3 for an illustration.
\end{proof}

\begin{property}\label{property:largedistance}
	For any $i\neq j\in [t]$, and any $m_1\neq m_2\in [k]$, the bipartite graph $(Code^i_{m_1},Code^j_{m_2})$ contains a matching of size at least $\ell$.
	\begin{proof}
		Let $w^1=\C(m_1)$ be the code-mapping of $m_1$, and let the $w^2=\C(m_2)$ be the code-mapping of $m_2$. Given $h,r\in [\ell+\alpha]$, observe that $\sigma^i_{(h,r)}$ is connected to all the nodes in $C^j_h\setminus \{\sigma^j_{(h,r)}\}$. Hence, since the distance between $w^1$ and $w^2$ is at least $\ell$, there are at least $\ell$ positions $h\in [\ell+\alpha]$ for which $w^1_h\neq w^2_h$, and therefore, there are at least $\ell$ positions $h\in [\ell+\alpha]$ for which it holds that $\sigma^i_{(h,w^1_h)}$ is connected to $\sigma^j_{(h,w^2_h)}$, where $w^1_h$ is the $h$'th position in $w^1$ and $w^2_h$ is the $h$'th position in $w^2$.
	\end{proof}
\end{property}

\begin{property}\label{property:doubleIndices}
	Let $i\neq j\in [t]$, let $m_1\neq m_2\in [k]$, and let $I$ be any independent set. Let $w^1=\C(m_1)$ be the code mapping of $m_1$, and let $w^2=\C(m_2)$ be the code-mapping of $m_2$. The number of positions $h\in [\ell+\alpha]$ for which it holds that $\sigma^i_{(h,w^1_h)}\in I$ and $\sigma^j_{(h,w^2_h)}\in I$ is at most $\alpha$.
\end{property}
\begin{proof}
	By Property~\ref{property:largedistance}, the bipartite graph $(Code^i_{m_1},Code^j_{m_2})$ contains a matching of size at least $\ell$. Therefore, there are at least $\ell$ positions $h\in [\ell+\alpha]$ for which $I$ contains at most one of the nodes $\sigma^i_{(h,w^1_h)}$ and  $\sigma^j_{(h,w^2_h)}$. This leaves at most $\alpha$ other positions for which $I$ can contain both $\sigma^i_{(h,w^1_h)}$ and  $\sigma^j_{(h,w^2_h)}$.
\end{proof}

\subsection{$G_{\bar{x}}$ is a $(1/2+\epsilon)$-approximate $MaxIS$ family of lower bound graphs}

In this section we show that there is a constant  $t>2$ for which $G_{\bar{x}}$ is a $(1/2+\epsilon)$-approximate $MaxIS$ family of graphs. We start with a slightly weaker statement for $t=2$, which is later used in the proof for $t>2$.

\subsubsection{Warm-up: $t=2$}

In this section we prove the following lemma.

\begin{lemma}\label{lem:t=2}
	For $t=2$, and for any constant $\epsilon>0$, it holds that $\set{G_{\bar{x}}=(V,E_{\bar{x}},w_{\bar{x}})\mid \bar{x}\in\prod_{i=1}^t \{0,1\}^k}$ is a $(3/4+\epsilon)$-approximate $MaxIS$ family of lower bound graphs. 
\end{lemma}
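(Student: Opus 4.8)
The plan is to establish, for $t=2$, that the family $\{G_{\bar{x}}\}$ is a family of lower bound graphs with respect to the promise pairwise disjointness function and a gap predicate $P$ distinguishing maximum independent sets of weight at least $\beta$ from those of weight at most $(3/4+\epsilon)\beta$, where $\beta$ will be chosen as roughly $2\ell$ (up to lower-order terms). First I would verify the two conditions of Definition~\ref{def: LBgraphs}: the locality condition (Item~\ref{ItemInLBGraphs: va}) is immediate, since the only input-dependent feature is the weight of the nodes in $A^i$, which depends only on $x^i$, while all edges — including the cut edges between the $C^i_h$ and $C^j_h$ cliques — are fixed. The bulk of the work is the second condition: relating the value of $\MaxIS(G_{\bar{x}})$ to whether $x^1,x^2$ are disjoint or uniquely intersecting.

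For the \emph{completeness} direction (uniquely intersecting $\Rightarrow$ large independent set): if $x^1_m = x^2_m = 1$ for some $m$, then by Property~\ref{property: independence} the set $Code^1_m \cup Code^2_m \cup \{v^1_m, v^2_m\}$ is independent, and its weight is $2\ell$ (from the two heavy $A$-nodes) plus $2(\ell+\alpha)$ from the $Code$ nodes, so at least $2\ell$; this gives the lower bound $\beta$. For the \emph{soundness} direction (pairwise disjoint $\Rightarrow$ no large independent set): I would take an arbitrary independent set $I$ and bound $w(I)$. The key structural observations: from each clique $A^i$ at most one node lies in $I$; from each clique $C^i_h$ at most one node lies in $I$; and crucially, if $v^i_{m}\in I$ is heavy (so $x^i_m=1$), then $I \cap Code^i$ can only use nodes in $Code^i_m$ (since $v^i_m$ is adjacent to everything in $Code^i \setminus Code^i_m$), contributing at most $\ell+\alpha$ from that copy's code gadget. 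The disjointness promise means there is no index $m$ with $x^1_m = x^2_m = 1$, so $I$ cannot simultaneously select a heavy node $v^1_{m}$ and a heavy node $v^2_{m}$ with the \emph{same} index $m$. If $I$ selects heavy nodes $v^1_{m_1}$ and $v^2_{m_2}$ with $m_1 \neq m_2$, then Property~\ref{property:doubleIndices} bounds the overlap of $Code^1_{m_1}$ and $Code^2_{m_2}$ inside $I$ by $\alpha$; combined with the constraint that $I\cap Code^i \subseteq Code^i_{m_i}$, the total code-gadget contribution is at most $(\ell+\alpha) + \alpha$ rather than $2(\ell+\alpha)$, so $w(I) \le 2\ell + (\ell + 2\alpha) = 3\ell + 2\alpha$, which is at most $(3/4+\epsilon)\beta$ once $\beta \approx 4\ell$ — wait, here I must be careful with the exact value of $\beta$. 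I would instead fix $\beta$ to be the completeness value and then carefully check that in every sub-case of the pairwise-disjoint case (both $v$-nodes light; one heavy one light; both heavy with distinct indices) the weight of $I$ is at most a $(3/4+\epsilon)$ fraction of $\beta$, choosing $\ell \gg \alpha$ so that the $\alpha$ terms and the $+1$ terms are absorbed into the $\epsilon$ slack.

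The main obstacle I anticipate is the case analysis in the soundness direction, specifically getting the constants to line up so that the ratio is genuinely below $3/4 + \epsilon$ and not merely below $1$. The delicate case is when $I$ contains one heavy node, say $v^1_{m_1}$, and the other copy contributes its full code gadget plus a light $A$-node: then $w(I) \le (\ell + (\ell+\alpha)) + (1 + (\ell+\alpha)) = 3\ell + 2\alpha + 1$ versus $\beta = 2\ell + 2(\ell+\alpha) = 4\ell + 2\alpha$, giving ratio $\approx 3/4$; with the $\ell \gg \alpha$ assumption this is $\le 3/4 + \epsilon$. I would also need to handle the "both heavy, distinct indices" case using Property~\ref{property:doubleIndices} to ensure $I$ cannot recover the lost code-gadget weight, and the "both light" case which is trivially small. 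The reason this only yields $3/4+\epsilon$ rather than $1/2+\epsilon$ for $t=2$ (and why the full theorem needs larger $t$) is that two copies are not enough to force the independent set to "spread out" sufficiently; the improvement to $1/2+\epsilon$ comes from iterating this argument across $t = O(1/\epsilon)$ copies, which is deferred to the next subsection. Finally I would record the cut size $|cut(G_{\bar{x}})|$ — it is $O(t^2 (\ell+\alpha)^2)$, dominated by the bicliques-minus-matchings between the $C^i_h$ and $C^j_h$ — so that Corollary~\ref{cor: hardness} can later be invoked with the right parameters.
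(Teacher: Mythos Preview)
Your proposal is correct and follows essentially the same route as the paper: the paper proves the lemma via two claims establishing that in the intersecting case there is an independent set of weight $4\ell+2\alpha$ (your completeness direction, using Property~\ref{property: independence}) and that in the disjoint case every independent set has weight at most $3\ell+2\alpha+1$ (your soundness case analysis, with the ``at most one heavy'' and ``both heavy with $m_1\neq m_2$'' cases handled exactly as you describe, the latter via the matching/Property~\ref{property:largedistance}--\ref{property:doubleIndices} bound $|I\cap(Code^1_{m_1}\cup Code^2_{m_2})|\le \ell+2\alpha$), and then chooses $\ell\gg\alpha$ so that the ratio $(3\ell+2\alpha+1)/(4\ell+2\alpha)$ is below $3/4+\epsilon$. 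Your initial guess $\beta\approx 2\ell$ was off, but you corrected it to $\beta=4\ell+2\alpha$, which is what the paper uses.
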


For the rest of this subsection, we assume that $t=2$. Lemma~\ref{lem:t=2} is a corollary of Claims~\ref{claim:notDisj} and~\ref{claim:Disj}.

\begin{claim}\label{claim:notDisj}
	For any $g_{(x^1,x^2)}\in \set{G_{(x^1,x^2)}=(V,E_{(x^1,x^2)},w_{(x^1,x^2)})\mid (x^1,x^2)\in\prod_{i=1}^2 \{0,1\}^k}$, if $x^1$ and $x^2$ are not disjoint, then $g_{(x^1,x^2)}$ contains an independent set of weight at least $4\ell+2\alpha$.
\end{claim}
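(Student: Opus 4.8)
The plan is to exhibit an explicit independent set of weight at least $4\ell + 2\alpha$ in $g_{(x^1,x^2)}$ whenever $x^1$ and $x^2$ intersect. Since $x^1$ and $x^2$ are not disjoint, there is some index $m \in [k]$ with $x^1_m = x^2_m = 1$. By the weight definition, both $v^1_m$ and $v^2_m$ have weight $\ell$. By Property~\ref{property: independence}, the set $\left(\bigcup_{i=1}^2 Code^i_m\right) \cup \{v^1_m, v^2_m\}$ is an independent set. I would take exactly this set as the witness.

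Now I would tally its weight. The two high-weight nodes $v^1_m$ and $v^2_m$ contribute $2\ell$. The set $Code^1_m$ consists of exactly $\ell+\alpha$ nodes (one per clique $C^1_h$), each of weight $1$, contributing $\ell+\alpha$; similarly $Code^2_m$ contributes $\ell+\alpha$. So the total weight is $2\ell + (\ell+\alpha) + (\ell+\alpha) = 4\ell + 2\alpha$, as claimed. One should double check that $v^i_m \notin Code^j$ so there is no double counting, which is immediate since $A^i$ and $Code^j$ are disjoint vertex sets by construction.

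This is essentially a direct computation once Property~\ref{property: independence} is invoked, so I do not anticipate a genuine obstacle; the only things to be careful about are (i) confirming that the index $m$ guaranteed by non-disjointness is the \emph{same} index used to pick both $v^1_m$ and $Code^1_m$ (and similarly for player $2$), which is exactly what the uniquely-intersecting / intersection hypothesis gives, and (ii) making sure the cardinalities $|Code^i_m| = \ell+\alpha$ are read off correctly from the definition $Code^i_m = \{\sigma^i_{(1,w_1)}, \dots, \sigma^i_{(\ell+\alpha, w_{\ell+\alpha})}\}$. Since the claim only asserts the \emph{existence} of such an independent set (a lower bound on $OPT$), no optimality argument is needed here — that direction will presumably be handled in the companion Claim~\ref{claim:Disj}.
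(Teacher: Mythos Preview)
Your proposal is correct and follows essentially the same approach as the paper: pick an intersecting index $m$, invoke Property~\ref{property: independence} on $\{v^1_m,v^2_m\}\cup Code^1_m\cup Code^2_m$, and compute its weight as $2\ell+2(\ell+\alpha)=4\ell+2\alpha$. The extra sanity checks you include (no double counting, $|Code^i_m|=\ell+\alpha$) are fine but not needed beyond what the construction already makes explicit.
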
 

\begin{proof}
	Since the sets are not disjoint, there is an $m\in [k]$ for which $x^1_m=x^2_m=1$. Therefore, the weight of each of the nodes $v^1_m$ and $v^2_m$ is $\ell$. By Property~\ref{property: independence}, the set $\{v^1_m\}\cup \{v^2_m\}\cup Code^1_m\cup Code^2_m$ is independent, and observe that its weight is $4\ell+2\alpha$.
\end{proof}

\begin{claim}\label{claim:Disj}
	For any $g_{(x^1,x^2)}\in \set{G_{(x^1,x^2)}=(V,E_{(x^1,x^2)},w_{(x^1,x^2)})\mid (x^1,x^2)\in\prod_{i=1}^2 \{0,1\}^k}$, if $x^1$ and $x^2$ are disjoint, then any independent set $I$ in $g_{(x^1,x^2)}$ is of weight at most $3\ell+2\alpha+1$.
\end{claim}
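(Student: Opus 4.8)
The plan is to bound the weight of an arbitrary independent set $I$ in $g_{(x^1,x^2)}$ under the assumption that $x^1$ and $x^2$ are disjoint. I will split the weight of $I$ into its contribution from the clique nodes $A^1 \cup A^2$ and its contribution from the code gadgets $Code^1 \cup Code^2$. Since $A^i$ is a clique, $I$ contains at most one node from $A^1$ and at most one from $A^2$; call them $v^1_{m_1}$ and $v^2_{m_2}$ if they exist (some may be absent, which only helps). The key case distinction is whether $m_1 = m_2$ or $m_1 \neq m_2$.

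**First I would handle the clique contribution.** If $I$ contains $v^1_{m_1}$ and $v^2_{m_2}$ with $m_1 = m_2 = m$, then since $x^1$ and $x^2$ are disjoint we cannot have $x^1_m = x^2_m = 1$, so at least one of the two nodes has weight $1$; thus the clique contribution is at most $\ell + 1$. If $m_1 \neq m_2$, each of the two clique nodes has weight at most $\ell$, for a total of at most $2\ell$ — but then, crucially, $v^1_{m_1}$ is adjacent to all of $Code^1 \setminus Code^1_{m_1}$ and $v^2_{m_2}$ is adjacent to all of $Code^2 \setminus Code^2_{m_2}$, so inside each code gadget $Code^i$ the independent set $I$ can only use nodes of $Code^i_{m_i}$. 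Now I invoke Property~\ref{property:doubleIndices}: among the $\ell+\alpha$ positions $h$, at most $\alpha$ of them have both $\sigma^1_{(h,w^1_h)} \in I$ and $\sigma^2_{(h,w^2_h)} \in I$ (where $w^1 = \C(m_1)$, $w^2 = \C(m_2)$), and for each of the other $\ell$ positions $I$ gets at most one node. So $|I \cap (Code^1 \cup Code^2)| \le (\ell+\alpha) \cdot 1 + \alpha = \ell + 2\alpha$, giving total weight at most $2\ell + \ell + 2\alpha = 3\ell + 2\alpha$, which is within the claimed bound.

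**Then I would handle the remaining subcase** $m_1 = m_2 = m$ (or the degenerate cases where one or both clique nodes are absent). Here $I$ restricted to $Code^i$ lies in $Code^i_m$ for each $i$ that has a clique node; for the extreme case where $I$ uses clique nodes from both sides with equal index $m$, $I$ can use all of $Code^1_m \cup Code^2_m$, which by Property~\ref{property: independence} is independent and has $2(\ell+\alpha)$ nodes, so the code contribution is at most $2\ell + 2\alpha$ and the clique contribution is at most $\ell + 1$, for a total of $3\ell + 2\alpha + 1$ — matching the bound exactly. The delicate subcases are when $I$ contains a clique node from only one side (say $v^1_{m}$ but nothing from $A^2$): then $I \cap Code^1 \subseteq Code^1_m$ has at most $\ell+\alpha$ nodes, while $I \cap Code^2$ could in principle be larger since it is unconstrained by a clique node; I would need to check that within a single code gadget $Code^2$ (which consists of $\ell+\alpha$ disjoint cliques $C^2_1,\dots,C^2_{\ell+\alpha}$) the independent set uses at most one node per clique, hence at most $\ell+\alpha$ nodes total, so the code contribution is still at most $2(\ell+\alpha)$ and the total at most $\ell + 2(\ell+\alpha) = 3\ell + 2\alpha < 3\ell + 2\alpha + 1$.

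**The main obstacle I anticipate** is the bookkeeping in the $m_1 \neq m_2$ case, where I must simultaneously use (i) the clique-to-code adjacency to confine $I$ within each gadget to a single codeword's worth of nodes, and (ii) the cross-gadget matching/distance bound of Property~\ref{property:doubleIndices} to prevent $I$ from taking both copies at too many positions; getting the arithmetic to land at $3\ell + 2\alpha$ rather than something larger is where the construction's parameters ($\ell \gg \alpha$, and the code distance $\ell$) actually get used. A secondary subtlety is treating uniformly the cases where one or both clique nodes are missing from $I$, which I would dispatch by noting that dropping a clique node only removes the code-confinement constraint on at most one gadget, and each gadget alone contributes at most $\ell+\alpha$ to any independent set since it is a disjoint union of $\ell+\alpha$ cliques.
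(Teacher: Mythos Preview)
Your proposal is correct and essentially mirrors the paper's argument. The paper organizes the case split by the number of weight-$\ell$ nodes in $I$ (at most one, versus exactly two), while you split on whether $m_1=m_2$ or $m_1\neq m_2$; but the substantive step is identical in both: in the case where $I$ picks clique nodes with distinct indices, confine $I\cap Code^i$ to $Code^i_{m_i}$ via the clique-to-code edges and then use the distance/matching property (the paper cites Property~\ref{property:largedistance}, you cite its corollary Property~\ref{property:doubleIndices}) to bound $|I\cap(Code^1_{m_1}\cup Code^2_{m_2})|\le \ell+2\alpha$.
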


\begin{proof}
	The proof is by the following simple case analysis.
	\begin{enumerate}
		\item $I$ contains at most one node of weight $\ell$: In this case, the node of weight $\ell$ must be either in the clique $A^1$ or in the clique $A^2$. Assume without loss of generality that this node is in $A^1$. Observe that we can take at most one node of weight $1$ from $A^2$. Furthermore, since each of $Code^1$ and $Code^2$ is a union of $\ell+\alpha$ cliques, we cannot construct an independent set in $Code^1\cup Code^2$ of weight larger than $2(\ell+\alpha)$, it follows that the weight of $I$ cannot be larger than $3\ell+2\alpha+1$.
		\item $I$ contains two nodes of weight $\ell$: This implies that $I$ contains one node $v^1_{m_1}\in A^1$ of weight $\ell$ and another node $v^2_{m_2}\in A^2$ of weight $\ell$, where $m_1\neq m_2\in [k]$. Since the strings $x^1,x^2$ are disjoint, it must be the case that $m_1\neq m_2$. Furthermore, since $v^1_{m_1}$ is connected to the nodes in $Code^1\setminus Code^1_{m_1}$, and $v^2_{m_2}$ is connected to the nodes in $Code^2\setminus Code^2_{m_2}$, it remains to show that $|I\cap(Code^1_{m_1}\cup Code^2_{m_2})|\leq \ell+2\alpha$. By Property~\ref{property:largedistance}, $(Code^1_{m_1}, Code^2_{m_2})$ contains a matching of size at least $\ell$, and since $|Code^1_{m_1}\cup Code^2_{m_2}|=(2\ell+2\alpha)$, this implies that $|I\cap(Code^1_{m_1}\cup Code^2_{m_2})|\leq \ell+2\alpha$. To conclude, in this case, $I$ contains $2$ nodes of weight $\ell$ and $\ell+2\alpha$ nodes of weight $1$. In total, the weight of $I$ is $3\ell+2\alpha$.
	\end{enumerate}
	Notice that $I$ cannot contain more than $2$ elements of weight $\ell$ since the elements of weight $\ell$ form two disjoint cliques.
\end{proof}

\begin{proof}[\textbf{Proof of Lemma~\ref{lem:t=2}}]
	Claims~\ref{claim:notDisj} and ~\ref{claim:Disj} imply that $\set{G_{\bar{x}}=(V,E_{\bar{x}},w_{\bar{x}})\mid \bar{x}\in\prod_{i=1}^2 \{0,1\}^k}$ is a family of lower bound graphs with respect to the set disjointness function and the graph predicate that distinguishes between graphs of maximum independent set at least $4\ell+2\alpha$ and graphs of maximum independent set at most $3\ell+2\alpha+1$. 
	
	We set $\ell=\log k-\log k/\log\log k, \alpha=\log k/\log\log k$. Hence $(\ell+\alpha)^{\alpha}=k$ as desired. Since the dominating terms in the two cases are $4\ell$ and $3\ell$, it follows that for any constant $\epsilon>0$,  $\set{G_{\bar{x}}=(V,E_{\bar{x}},w_{\bar{x}})\mid \bar{x}\in\prod_{i=1}^2 \{0,1\}^k}$ is a $(3/4+\epsilon)$-approximate $MaxIS$ family of graphs\footnote{In fact, by slightly changing the parameters $\ell$ and $\alpha$, the claim holds for any $\epsilon=\Omega(1/\log k)$.}.
\end{proof}

\subsubsection{Hardness Amplification using $t>2$ Players}

In this section we prove the following lemma.

\begin{lemma}\label{lem:t>2}
	For any constant $\epsilon>0$, there is a constant $t>2$ for which it holds that $\set{G_{\bar{x}}=(V,E_{\bar{x}},w_{\bar{x}})\mid \bar{x}\in\prod_{i=1}^t \{0,1\}^k}$ is a $(1/2+\epsilon)$-approximate $MaxIS$ family of lower bound graphs. 
\end{lemma}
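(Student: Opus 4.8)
The plan is to mirror the two-party analysis of Lemma~\ref{lem:t=2}, but now exploiting that with $t$ players the "uniquely intersecting" case produces an independent set roughly $t$ times as heavy as the per-copy contribution, while in the "pairwise disjoint" case no independent set can collect more than roughly one heavy $A^i$-node worth $\ell$ plus a small $O(t^2\alpha)$ correction coming from the code gadgets. Since the dominating terms will be $2t\ell$ versus $(t+1)\ell$, the ratio tends to $(t+1)/(2t) \to 1/2$ as $t\to\infty$, so choosing $t = \Theta(1/\epsilon)$ gives a $(1/2+\epsilon)$-approximate family.

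\textbf{Step 1 (uniquely intersecting $\Rightarrow$ large $\MaxIS$).} Suppose there is an index $m\in[k]$ with $x^i_m=1$ for all $i\in[t]$. By Property~\ref{property: independence}, the set $\bigl(\bigcup_{i=1}^t Code^i_m\bigr)\cup\{v^i_m\mid i\in[t]\}$ is independent. Each $v^i_m$ has weight $\ell$ and each $Code^i_m$ contributes $\ell+\alpha$ nodes of weight $1$, so this independent set has weight exactly $t\ell + t(\ell+\alpha) = 2t\ell + t\alpha$. This is the analogue of Claim~\ref{claim:notDisj}.

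\textbf{Step 2 (pairwise disjoint $\Rightarrow$ small $\MaxIS$).} This is the crux and the main obstacle. Fix an independent set $I$ and write $I_A^i = I\cap A^i$, $I_C^i = I\cap Code^i$. Because the strings are pairwise disjoint, the set $S=\{(i,m): v^i_m\in I_A^i\text{ and }x^i_m=1\}$ has the property that its indices $m$ are all distinct across $i$ (two heavy chosen nodes $v^i_{m_1}, v^j_{m_2}$ with $i\ne j$ would force $m_1\ne m_2$ since $x^i,x^j$ are disjoint, and within one $A^i$ only one node is in $I$ since $A^i$ is a clique). So $I$ contains at most one heavy node per copy, hence at most $t$ heavy nodes, but — and this is where I have to be careful — $I$ contributes heavy weight $\ell$ only from copies where it picks a heavy node; call that set of copies $T_{\text{heavy}}\subseteq[t]$ with $|T_{\text{heavy}}|=s$. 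For a copy $i\in T_{\text{heavy}}$ with chosen $v^i_{m_i}$, the node $v^i_{m_i}$ kills all of $Code^i\setminus Code^i_{m_i}$, so $I_C^i\subseteq Code^i_{m_i}$. For a copy $i\notin T_{\text{heavy}}$, $I_A^i$ contributes at most $1$ (one weight-$1$ node of the clique $A^i$) and $I_C^i$ has size at most $\ell+\alpha$ (since $Code^i$ is a union of $\ell+\alpha$ cliques). The remaining task is to bound $\bigl|\bigcup_{i\in T_{\text{heavy}}} I_C^i\bigr|$. Here I use Property~\ref{property:doubleIndices}: for each pair $i\ne j$ in $T_{\text{heavy}}$ with distinct $m_i\ne m_j$, at most $\alpha$ positions $h$ have both $\sigma^i_{(h,w^i_h)}\in I$ and $\sigma^j_{(h,w^j_h)}\in I$. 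Summing over the $\binom{s}{2}$ pairs and over the $\ell+\alpha$ cliques $C_h$ (noting that within each $C_h$ across all copies in $T_{\text{heavy}}$, the chosen nodes form an independent set in a graph that is a disjoint union of small cliques after removing the matching), one gets $\bigl|\bigcup_{i\in T_{\text{heavy}}} I_C^i\bigr| \le (\ell+\alpha) + \binom{s}{2}\alpha = \ell + \alpha + O(t^2\alpha)$: roughly, across all heavy copies the "profile" over the $\ell+\alpha$ coordinate-cliques behaves like a single codeword's worth of nodes ($\le \ell+\alpha$ of them) plus a pairwise-collision correction bounded by Property~\ref{property:doubleIndices}. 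Putting it together: $w(I) \le s\ell + (t-s)\cdot 1 + \bigl(\ell+\alpha+O(t^2\alpha)\bigr) + (t-s)(\ell+\alpha) \le (t+1)\ell + O(t^2\alpha) + O(t)$, where the worst case is $s=t$ (every extra non-heavy copy trades a weight-$\ell$ node for a weight-$(\ell+\alpha+1)$ code-plus-clique contribution, which does not help since $\ell\gg\alpha$). So the disjoint case gives $\MaxIS \le (t+1)\ell + O(t^2\alpha)$, the analogue of Claim~\ref{claim:Disj}.

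\textbf{Step 3 (conclusion).} Set $\ell = \log k - \log k/\log\log k$ and $\alpha = \log k/\log\log k$ as in Lemma~\ref{lem:t=2}, so that $(\ell+\alpha)^\alpha = k$ and $\ell\gg\alpha$. The two predicate thresholds are $2t\ell + t\alpha$ (YES) and $(t+1)\ell + O(t^2\alpha)$ (NO). Since $\alpha = o(\ell)$ and $t$ is a constant, the ratio of NO-threshold to YES-threshold is $\frac{(t+1)\ell + o(\ell)}{2t\ell + o(\ell)} \to \frac{t+1}{2t}$ as $k\to\infty$. Choosing $t$ a large enough constant makes $\frac{t+1}{2t} < \frac{1}{2}+\epsilon$; then, combining Steps 1 and 2, $\{G_{\bar x}\}$ is a family of lower bound graphs with respect to the promise pairwise disjointness function and the gap predicate $P$ separating weight $\ge 2t\ell+t\alpha$ from weight $\le (t+1)\ell + O(t^2\alpha) < (1/2+\epsilon)(2t\ell+t\alpha)$, i.e. a $(1/2+\epsilon)$-approximate $\MaxIS$ family, as claimed. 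The main obstacle is Step 2, specifically getting a clean $O(t^2\alpha)$ (rather than something growing with $\ell$) bound on the code-gadget contribution of the heavy copies; this is exactly what Property~\ref{property:doubleIndices} is designed to deliver, applied pairwise and summed.
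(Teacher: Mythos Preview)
Your proposal is correct and follows the same overall strategy as the paper: the uniquely-intersecting case (your Step~1) is exactly Claim~\ref{claim:notDisjt>2}, and the heart of Step~2 --- the bound $\bigl|\bigcup_{i\in T_{\text{heavy}}} I\cap Code^i_{m_i}\bigr|\le (\ell+\alpha)+\binom{s}{2}\alpha$ --- is precisely Claim~\ref{claim:helper} (the paper proves it via the indicator calculation $\sum_h c_h\le |\{h:c_h\ge 1\}|+\sum_h\binom{c_h}{2}\le(\ell+\alpha)+\binom{s}{2}\alpha$, which is what your informal ``single codeword plus pairwise collisions'' sentence is pointing at). The one structural difference is that in the pairwise-disjoint case the paper argues by induction on $t$ (Claim~\ref{claim:Disjt>2}): if some $A^i$ contributes no heavy node it peels off that copy and applies the bound for $t-1$, and only the all-heavy case invokes Corollary~\ref{cor:helper}. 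You instead handle both cases in one shot by splitting $[t]$ into $T_{\text{heavy}}$ and its complement and bounding each part directly, which is slightly more elementary and avoids the induction; the paper's inductive packaging, on the other hand, isolates Corollary~\ref{cor:helper} as a standalone statement that is reused verbatim in the quadratic lower bound of Section~\ref{sec:-quadrlowerbound}. Either way the final bound is $(t+1)\ell+O(t^2\alpha)$ versus $2t\ell+t\alpha$, and your Step~3 matches the paper's choice of $t=\Theta(1/\epsilon)$.
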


Lemma~\ref{lem:t>2} follows from Claims~\ref{claim:notDisjt>2} and~\ref{claim:Disjt>2}.

\begin{claim}\label{claim:notDisjt>2}
	For any positive integer $t$, and any $g_{\bar{x}}\in \set{G_{\bar{x}}=(V,E_{\bar{x}},w_{\bar{x}})\mid \bar{x}\in\prod_{i=1}^t \{0,1\}^k}$, if there is an $m\in [k]$ for which it holds that $x^1_m=\cdots= x^t_m=1$, then $g_{\bar{x}}$ contains an independent set of weight at least $t(2\ell+\alpha)$.
\end{claim}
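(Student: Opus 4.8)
The plan is to mimic the structure of Claim~\ref{claim:notDisj} (the $t=2$ case) but with $t$ players. Suppose there is an $m\in[k]$ with $x^1_m=\dots=x^t_m=1$. Then by definition of the weight function, each node $v^i_m\in A^i$ has weight $\ell$ for every $i\in[t]$. The natural candidate independent set is
\begin{align*}
I=\left(\bigcup_{i=1}^t Code^i_m\right)\cup\{v^i_m\mid i\in[t]\}.
\end{align*}

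First I would invoke Property~\ref{property: independence}, which states precisely that this set $I$ is independent, for any fixed $m\in[k]$. So there is nothing new to prove about independence; the only remaining task is to compute $w(I)$.

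For the weight computation: there are $t$ nodes $v^1_m,\dots,v^t_m$, each of weight $\ell$, contributing $t\ell$. For the code-gadget part, for each $i\in[t]$ the set $Code^i_m$ consists of exactly one node from each of the $\ell+\alpha$ cliques $C^i_1,\dots,C^i_{\ell+\alpha}$, so $|Code^i_m|=\ell+\alpha$, and all these nodes have weight $1$ (they lie outside $\bigcup_i A^i$). Since the $Code^i_m$ are disjoint across $i$ (they live in different copies $H^i$), the total contribution is $t(\ell+\alpha)$. Summing, $w(I)=t\ell+t(\ell+\alpha)=t(2\ell+\alpha)$, which is the claimed bound.

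I don't anticipate a real obstacle here — this is the ``easy direction'' of the reduction, and all the work is already packaged into Property~\ref{property: independence}. The one small point to be careful about is making sure the weight accounting uses $x^i_m=1$ for \emph{all} $i$ (this is exactly where the hypothesis is used), and that the $Code^i_m$ sets are pairwise disjoint so their sizes add without overlap; both are immediate from the construction. Note also that the claim only asserts a lower bound on the maximum independent set weight, so exhibiting this single independent set suffices.
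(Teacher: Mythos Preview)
Your proposal is correct and follows essentially the same approach as the paper: exhibit the set $\left(\bigcup_{i=1}^t Code^i_m\right)\cup\{v^i_m\mid i\in[t]\}$, invoke Property~\ref{property: independence} for independence, and compute its weight as $t\ell + t(\ell+\alpha) = 2t\ell + t\alpha$. The paper's proof is terser but identical in substance.
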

\begin{proof}
	 Observe that for any $i\in [t]$, it holds that $w(v^i_m)=\ell$. Furthermore, by Property~\ref{property: independence}, $\left(\bigcup_{i=1}^t Code^i_m\right)\cup\{v^i_m\mid i\in[t]\}$ is an independent set, and it is of weight $2t\ell+t\alpha$.
\end{proof}


	
	


Before we proceed to the case in which the strings are pairwise disjoint, let us prove the following helper claim and a corollary of it.

\begin{claim}\label{claim:helper}
	For any positive integer $t$. Let $m_1,m_2,\cdots, m_t$ be any $t$ distinct values in $[k]$. For any independent set $I$, if $\{v^i_{m_i}\mid i\in [t]\}\subseteq I$, then $$\big|I\cap \big(\bigcup_{i=1}^t Code^i_{m_i}\big)\big|\leq \ell+\alpha t^2$$
\end{claim}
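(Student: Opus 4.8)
\textbf{Proof proposal for Claim~\ref{claim:helper}.}

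The plan is to bound the contribution of each clique $C^i_h$ in the code gadget and then argue about the overlap between the different code-words $Code^i_{m_i}$, taking advantage of the pairwise large distance guaranteed by Property~\ref{property:largedistance}. Fix an independent set $I$ with $\{v^i_{m_i}\mid i\in[t]\}\subseteq I$, and write $w^i=\C(m_i)$ for each $i\in[t]$. For a fixed coordinate $h\in[\ell+\alpha]$, consider the union of cliques $\bigcup_{i=1}^t C^i_h$. By the construction of $G$, for $i\neq j$ the only non-edge between $C^i_h$ and $C^j_h$ is the pair $\sigma^i_{(h,r)},\sigma^j_{(h,r)}$ with the same index $r$; combined with the fact that each $C^i_h$ is a clique, this means that $I$ can contain, from $\bigcup_{i=1}^t C^i_h$, only nodes that all share the same second index $r$, i.e. nodes of the form $\sigma^i_{(h,r)}$ for a single value $r=r(h)$. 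In particular, the only way $\sigma^i_{(h,w^i_h)}\in I$ for two different indices $i,j$ is if $w^i_h=w^j_h$.

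Now I would classify coordinates $h\in[\ell+\alpha]$ into two types according to how many of the nodes $\sigma^i_{(h,w^i_h)}$ lie in $I$. Call $h$ \emph{heavy} if $I$ contains $\sigma^i_{(h,w^i_h)}$ for at least two distinct $i$'s, and \emph{light} otherwise. For a light coordinate, the contribution of $\{\sigma^i_{(h,w^i_h)}\mid i\in[t]\}$ to $I$ is at most $1$, so over all light coordinates we get at most $\ell+\alpha$ — but in fact I want the sharper bound $\ell$ for the light coordinates and a separate bound for heavy ones, so let me instead count heavy coordinates directly. For a heavy coordinate $h$, there exist $i\neq j$ with $\sigma^i_{(h,w^i_h)}\in I$ and $\sigma^j_{(h,w^j_h)}\in I$, hence $w^i_h=w^j_h$ by the previous paragraph. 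By Property~\ref{property:doubleIndices} applied to the pair $(i,j)$, the number of coordinates $h$ with $\sigma^i_{(h,w^i_h)}\in I$ and $\sigma^j_{(h,w^j_h)}\in I$ is at most $\alpha$; summing over all $\binom{t}{2}<t^2$ ordered-or-unordered pairs, the number of heavy coordinates is at most $\alpha\binom{t}{2}\le \alpha t^2$. At each heavy coordinate, $I$ can contain at most $t$ of the relevant nodes (one per copy), and at each light coordinate at most $1$; so
\begin{align*}
\Big|I\cap\Big(\bigcup_{i=1}^t Code^i_{m_i}\Big)\Big| \;\le\; 1\cdot(\#\text{light}) + t\cdot(\#\text{heavy}) \;\le\; (\ell+\alpha) + t\cdot \alpha t^2,
\end{align*}
which is slightly weaker than the claimed $\ell+\alpha t^2$. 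To get the stated bound I would tighten the accounting: at a heavy coordinate the $t$ nodes $\sigma^i_{(h,w^i_h)}$ that can appear in $I$ replace what would otherwise be counted among the "light" budget, and the clean way to organize this is to say that the total is at most $\ell+\alpha$ (the trivial per-coordinate bound of $1$, except we are overcounting the heavy coordinates where up to $t$ nodes sit) \emph{plus} the excess $(t-1)$ at each of the at most $\alpha t^2/ (t-1)$-ish heavy coordinates; doing the bookkeeping with $\binom{t}{2}\alpha$ heavy coordinates each contributing an excess of $t-1$ gives excess $\le (t-1)\binom{t}{2}\alpha$, and then one reconciles constants.

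The main obstacle, and the step I would be most careful about, is exactly this final counting reconciliation: getting the bound to come out as $\ell+\alpha t^2$ rather than something like $\ell+\alpha+\alpha t^3$ requires choosing the right notion of "heavy" coordinate and the right way to charge nodes. I expect the intended argument is: every coordinate contributes at most $1$ to $|I\cap(\bigcup_i Code^i_{m_i})|$ \emph{unless} it is a coordinate where two copies agree, of which (by Property~\ref{property:doubleIndices} over all $\binom t2$ pairs) there are at most $\binom t2\alpha\le \alpha t^2/2$, and at such a coordinate the contribution is at most $t$; but since we may assume $t\le $ something and absorb, the clean bound $\ell+\alpha t^2$ follows by noting $\#\text{light}\le \ell$ (there are only $\ell+\alpha$ coordinates total and at least... ) — here I would lean on $\ell\gg\alpha$ and the code distance to argue that the light contribution is at most $\ell$ and the heavy contribution is at most $\alpha t^2$, rather than trying to be tight coordinate-by-coordinate. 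Once the two-type split is set up correctly, everything else is Property~\ref{property:doubleIndices} plus summation over pairs.
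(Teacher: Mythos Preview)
Your light/heavy split and the invocation of Property~\ref{property:doubleIndices} summed over pairs is exactly the skeleton of the paper's argument, but you are missing the one charging step that makes the constants work, and without it your bound stays at $\ell+\alpha t^3$ rather than $\ell+\alpha t^2$. The gap is this: at a heavy coordinate $h$, let $c_h\ge 2$ be the number of indices $i$ with $\sigma^i_{(h,w^i_h)}\in I$. You bound $c_h\le t$ and multiply by the number of heavy coordinates, which you bound by $\alpha\binom{t}{2}$; that product is cubic in $t$. The paper instead uses the elementary inequality $c_h\le 2\binom{c_h}{2}$, valid whenever $c_h\ge 2$, so that
\[
\sum_{h\ \text{heavy}} c_h \;\le\; 2\sum_{h\ \text{heavy}} \binom{c_h}{2}
\;=\; 2\sum_{h}\ \#\{(i,j): i<j,\ \psi^h_{i,j}=1\}
\;=\; 2\sum_{i<j}\sum_h \psi^h_{i,j}
\;\le\; 2\binom{t}{2}\alpha.
\]
Adding the light contribution of at most $\ell+\alpha$ gives $\ell+\alpha+\alpha t(t-1)\le \ell+\alpha t^2$.

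The point is that a heavy coordinate with many nodes in $I$ also produces many agreeing pairs, and Property~\ref{property:doubleIndices} bounds the total number of agreeing (pair, coordinate) incidences by $\binom{t}{2}\alpha$; charging each node to a pair (with the factor $2$ from $c\le 2\binom{c}{2}$) is what converts this into a bound on the number of nodes. Your attempts to ``reconcile constants'' by appealing to $\ell\gg\alpha$ or by redefining ``heavy'' will not work here: the claim is stated for arbitrary $\ell,\alpha,t$, and it is used downstream (Corollary~\ref{cor:helper}, and repeatedly in Section~\ref{sec:-quadrlowerbound}) with $t$ replaced by subsets of various sizes, so one really needs the quadratic dependence on $t$.
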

\begin{proof}
	Let us start with some notations. Let $w^i=\C(m_i)$ be the code-mapping of $m_i$. Hence, we have that  $Code^i_{m_i}=\{\sigma^i_{(1,w^i_1)},\cdots, \sigma^i_{(\ell+\alpha,w^i_{\ell+\alpha})}\}$. Furthermore, let $S=\{h\in [\ell+\alpha]\mid\sum_{i=1}^t|I\cap \sigma^i_{(h,w^i_h)}|\leq 1\}$, $\bar{S}=[\ell+\alpha]\setminus S$. That is, $S$ is the set of values $h\in [\ell+\alpha]$ for which the independent set $I$ contains at most one node in $\bigcup_{i=1}^t \{\sigma^i_{(h,w^i_h)}\}$. Finally, let $\psi^h_{i,j}$ be an indicator defined as follows.	
	\[
	\psi^h_{i,j} = \begin{cases}
	1 & \text{if } \sigma^i_{(h,w^i_h)}\in I \text{ and } \sigma^j_{(h,w^j_h)}\in I\\
	0 & \text{otherwise } 
	\end{cases}
	\]

	By Property~\ref{property:doubleIndices},  for any $i\neq j\in [t]$, it holds that $\sum_{h\in [\alpha+\ell]} \psi^h_{i,j}\leq \alpha$. Hence,
	\begin{align}
	&\big|I\cap \big(\bigcup_{i=1}^t Code^i_{m_i}\big)\big| = \sum_{i=1}^t \big|I\cap Code^i_{m_i}\big|=\sum_{i=1}^t \big|I\cap \big(\bigcup_{h=1}^{\ell+\alpha} \{\sigma^i_{(h,w^i_h)}\}\big)\big|\\
	&=\sum_{i=1}^t \sum_{h=1}^{\ell+\alpha}\big|I\cap \{\sigma^i_{(h,w^i_h)}\}\big|=\sum_{h=1}^{\ell+\alpha}\sum_{i=1}^t\big|I\cap \{\sigma^i_{(h,w^i_h)}\}\big|\\
	&=\left(\sum_{h\in S} \sum_{i=1}^t\big|I\cap \{\sigma^i_{(h,w^i_h)}\}\big|\right) + \left(\sum_{h\in \bar{S}} \sum_{i=1}^t\big|I\cap \{\sigma^i_{(h,w^i_h)}\}\big|\right)\\
	&\leq \left(\sum_{h\in S} \sum_{i=1}^t\big|I\cap \{\sigma^i_{(h,w^i_h)}\}\big|\right) + \left(\sum_{h\in \bar{S}} \sum_{i\neq j\in [t]} 2\psi^h_{i,j}\right)\\
	&=\left(\sum_{h\in S} \sum_{i=1}^t\big|I\cap \{\sigma^i_{(h,w^i_h)}\}\big|\right) + \left(\sum_{i\neq j\in [t]} \sum_{ h\in \bar{S} } 2\psi^h_{i,j}\right)\\
	&\leq \ell+\alpha+2\alpha\cdot t(t-1)/2\leq \ell+\alpha t^2
	\end{align}
	Where (1), (2), and (3) are straightforward. (4) holds because for any $h\in \bar{S}$, there are at least two indices $i\neq j\in [t]$, for which it holds that $\sigma^i_{(h,w^i_h)}\in I$ and $\sigma^j_{(h,w^j_h)}\in I$. (5) holds by changing the summation order of the second sum. (6) holds because for any $h\in S$, $\sum_{i=1}^t |I\cap \{\sigma^i_{(h,w^i_h)}\}|\leq 1$, and by Property 3, $\sum_{h\in [\ell+\alpha]} \psi^h_{i,j}\leq \alpha$.
\end{proof}

\begin{corollary}\label{cor:helper}
	For any positive integer $t$, let $m_1,m_2,\cdots, m_t$ be any $t$ distinct values in $[k]$. For any independent set $I$, if $\{v^i_{m_i}\mid i\in [t]\}\subseteq I$, then $$w(I)\leq (t+1)\ell+\alpha t^2$$
\end{corollary}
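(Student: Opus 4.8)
\textbf{Plan for proving Corollary~\ref{cor:helper}.} The plan is to combine Claim~\ref{claim:helper} with a simple accounting of the weight contributed by the rest of an independent set $I$ that contains $\{v^i_{m_i}\mid i\in [t]\}$. The key structural observation is that, once $I$ contains $v^i_{m_i}$ for every $i$, the only nodes of weight $\ell$ that $I$ can possibly contain are exactly these $t$ nodes: the weight-$\ell$ nodes all live inside the cliques $A^1,\dots,A^t$, within each $A^i$ at most one node can be chosen, and so $I$ contains precisely $t$ nodes of weight $\ell$, contributing total weight $t\ell$.

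First I would bound the weight of $I$ outside $\bigcup_i A^i$ and outside $\bigcup_i Code^i_{m_i}$. Since $v^i_{m_i}$ is adjacent to every node in $Code^i\setminus Code^i_{m_i}$, no node of $Code^i\setminus Code^i_{m_i}$ lies in $I$; hence within the code gadgets, $I$ is confined to $\bigcup_{i=1}^t Code^i_{m_i}$. By Claim~\ref{claim:helper}, this contributes at most $\ell+\alpha t^2$ nodes, each of weight $1$, i.e.\ weight at most $\ell+\alpha t^2$. The only remaining vertices are the weight-$1$ nodes of $\bigcup_i A^i$, but each clique $A^i$ already contains the chosen node $v^i_{m_i}$, so no further node of $A^i$ can be added. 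Therefore
\begin{align*}
w(I) \leq \underbrace{t\ell}_{\text{the } v^i_{m_i}} + \underbrace{\ell + \alpha t^2}_{\text{Claim~\ref{claim:helper}}} = (t+1)\ell + \alpha t^2,
\end{align*}
as claimed.

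I do not anticipate a genuine obstacle here: the corollary is essentially bookkeeping on top of Claim~\ref{claim:helper}. The one point that requires a moment's care is justifying that $I$ contains \emph{no} weight-$\ell$ node beyond the $t$ prescribed ones and \emph{no} weight-$1$ node of the cliques $A^i$ — both follow from the fact that each $A^i$ is a clique and already contributes $v^i_{m_i}$ to $I$. A second small point is that the $m_i$ are required to be distinct (an explicit hypothesis of the statement), which is exactly what lets us invoke Property~\ref{property:doubleIndices} inside Claim~\ref{claim:helper}; nothing extra is needed at the corollary level.
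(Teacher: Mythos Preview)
Your proof is correct and follows essentially the same route as the paper's: decompose $w(I)$ into the contribution from $\bigcup_i A^i$ (at most $t\ell$, since each $A^i$ is a clique already containing $v^i_{m_i}$) and the contribution from the code gadgets, which is confined to $\bigcup_i Code^i_{m_i}$ and bounded by $\ell+\alpha t^2$ via Claim~\ref{claim:helper}. One small wording issue: the hypothesis does not guarantee that each $v^i_{m_i}$ has weight $\ell$ (only that it has weight \emph{at most} $\ell$), so your phrase ``$I$ contains precisely $t$ nodes of weight $\ell$'' should read ``$I$ contains precisely the $t$ nodes $v^i_{m_i}$ from $\bigcup_i A^i$, contributing total weight at most $t\ell$''; the bound is unaffected.
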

\begin{proof}
	Since each $v^i_{m_i}$ is connected to all the nodes in $Code^i\setminus Code^i_{m_i}$, we have that
	\begin{align*}
	&w(I)=w(I\cap (\bigcup_{i=1}^t A_i)) + w(I\cap (\bigcup_{i=1}^tCode^i))=\left(\sum_{i=1}^t w(v^i_{m_i})\right) + w(I\cap (\bigcup_{i=1}^tCode^i_{m_i}))\\&\leq t\ell+\ell+ \alpha t^2=(t+1)\ell +\alpha t^2
	\end{align*}
\end{proof}

\begin{claim}\label{claim:Disjt>2}
	For any positive integer $t$, and any $g_{\bar{x}}\in \set{G_{\bar{x}}=(V,E_{\bar{x}},w_{\bar{x}})\mid \bar{x}\in\prod_{i=1}^t \{0,1\}^k}$, if the strings $x^1,\cdots, x^t$ are pairwise disjoint, then the weight of any independent set is at most $(t+1)\ell+\alpha t^2$.
\end{claim}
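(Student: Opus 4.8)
## Proof Plan for Claim~\ref{claim:Disjt>2}

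The plan is to reduce the pairwise-disjoint case to the structural bound already established in Corollary~\ref{cor:helper}. Fix a graph $g_{\bar{x}}$ where the strings $x^1,\dots,x^t$ are pairwise disjoint, and let $I$ be an arbitrary independent set. The crucial observation is that the high-weight nodes (those of weight $\ell$) live only in the cliques $A^1,\dots,A^t$: a node $v^i_m$ has weight $\ell$ exactly when $x^i_m=1$. Since each $A^i$ is a clique, $I$ contains at most one node from each $A^i$, so $I$ contains at most $t$ nodes of weight $\ell$, say $v^{i_1}_{m_{i_1}},\dots$ with one index $m_i$ per $A^i$ that $I$ meets in a weight-$\ell$ node.

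The key step is to argue that the indices $m_i$ coming from distinct cliques must themselves be distinct. This is exactly where the pairwise-disjointness promise is used: if $v^i_{m}\in I$ and $v^j_{m}\in I$ for $i\neq j$ with both of weight $\ell$, then $x^i_m = x^j_m = 1$, contradicting that $x^i$ and $x^j$ are disjoint. Hence the at-most-$t$ weight-$\ell$ nodes of $I$ sit in distinct cliques $A^{i}$ and carry pairwise distinct indices $m_i\in[k]$. If $I$ has exactly $t$ weight-$\ell$ nodes, we are in the hypothesis of Corollary~\ref{cor:helper} (after possibly relabelling so that $I$ meets each $A^i$ in $v^i_{m_i}$), which directly gives $w(I)\le (t+1)\ell+\alpha t^2$, as desired.

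It remains to handle the case where $I$ contains $s<t$ nodes of weight $\ell$. Here I would give a short separate bound: $w(I)$ is at most $s\ell$ (from the weight-$\ell$ nodes in $s$ of the cliques), plus at most $(t-s)$ (from taking at most one weight-$1$ node in each remaining clique $A^i$), plus $w(I\cap(\bigcup_{i=1}^t Code^i))$. Since each $Code^i$ is a disjoint union of $\ell+\alpha$ cliques, $I$ meets it in at most $\ell+\alpha$ nodes, so the code contribution is at most $t(\ell+\alpha)$. Altogether this is at most $s\ell+(t-s)+t(\ell+\alpha)\le (t-1)\ell + t + t\ell + t\alpha$, which for $\ell\gg\alpha$ and $t$ constant is comfortably below $(t+1)\ell+\alpha t^2$. (One can also cleanly unify the two cases by noting that the code-gadget bound $t(\ell+\alpha)$ together with at most $t\ell$ from the cliques already gives $\le 2t\ell + t\alpha$, but this is weaker than what Corollary~\ref{cor:helper} provides in the tight case, so the case split is worth keeping for the leading constant.)

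The main obstacle is making sure the reduction to Corollary~\ref{cor:helper} is applied correctly: the corollary assumes $\{v^i_{m_i}\mid i\in[t]\}\subseteq I$ with \emph{all $t$ cliques} represented and the $m_i$ distinct, so the argument must cover the case of fewer than $t$ weight-$\ell$ nodes separately rather than trying to force it into the corollary. Everything else — the clique observation, the use of disjointness to get distinctness of indices, and the crude code-gadget bound — is routine. I would write it as: extract the weight-$\ell$ nodes, use disjointness for distinctness, invoke Corollary~\ref{cor:helper} when there are $t$ of them, and dispatch the remaining case with the elementary clique-counting bound above.
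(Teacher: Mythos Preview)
Your reduction of the ``full'' case (when $I$ meets every $A^i$ in a weight-$\ell$ node) to Corollary~\ref{cor:helper} is exactly right, and the use of pairwise disjointness to force the indices $m_i$ to be distinct is the key point. The gap is in your treatment of the case $s<t$. Your bound there is
\[
s\ell + (t-s) + t(\ell+\alpha)\;\le\;(t-1)\ell + t + t\ell + t\alpha \;=\;(2t-1)\ell + t(1+\alpha),
\]
and you assert this is ``comfortably below $(t+1)\ell+\alpha t^2$''. It is not: for any $t\ge 3$ the leading term $(2t-1)\ell$ already exceeds $(t+1)\ell$, and since $\ell\gg\alpha$ (indeed $\ell/\alpha=\log\log k-1\to\infty$) the lower-order $\alpha t^2$ term cannot absorb the gap. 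So the crude code-gadget bound $t(\ell+\alpha)$ is too weak here, and the argument as written fails precisely in the regime $t>2$ that the claim is meant to cover.

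There are two natural fixes. One is to still exploit the code structure on the $s$ players that \emph{do} carry weight-$\ell$ nodes: apply Corollary~\ref{cor:helper} to the subgraph on $\bigcup_{j}V^{i_j}$ to get $w(I\cap\bigcup_j V^{i_j})\le (s+1)\ell+\alpha s^2$, and only use the trivial $\ell+\alpha+1$ bound on each of the remaining $t-s$ parts; summing gives $(t+1)\ell+\alpha s^2+(t-s)(\alpha+1)\le (t+1)\ell+\alpha t^2$. The other fix is the paper's: argue by induction on $t$. If some $A^i$ contributes no weight-$\ell$ node, bound $w(I\cap V^i)\le \ell+\alpha+1$ and apply the inductive hypothesis to the $(t-1)$-player graph on $\bigcup_{j\ne i}V^j$, giving $t\ell+\alpha(t-1)^2+\ell+\alpha+1<(t+1)\ell+\alpha t^2$; otherwise every $A^i$ contributes a weight-$\ell$ node and Corollary~\ref{cor:helper} applies directly. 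Either route closes the gap; the point is that in the deficient case you must still use the code-matching structure somewhere rather than the naive clique count on all $t$ code gadgets.
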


\begin{proof}

	The proof is by induction on $t$, where the base case of $t=1$ is straightforward (even the case of $t=2$ was already proved in Claim~\ref{claim:Disj}). We assume correctness for $t-1$, and prove correctness for $t$.
	Let $I$ be an independent set in $g_{\bar{x}}$. Recall that $A^i$ is a clique and therefore $|I\cap A^i|\leq 1$. The proof is by the following case analysis. 
	\begin{enumerate}
		\item\label{case:firstLinear} There is some $i\in [t]$ for which it holds that $I\cap A^i$ is either empty, or contains a node of weight $1$: Observe that in this case, $w(I\cap V^i)\leq \ell+\alpha + 1$. This is because any independent set contains at most $\ell+\alpha$ nodes in $Code^i=V^i\setminus A^i$. Furthermore, by the inductive hypothesis on the graph induced by the nodes in $\bigcup_{j\in [t]\setminus \{i\}} V^j$, we have that
		$w(I)\leq t\ell+\alpha(t-1)^2+\ell+\alpha+1\leq (t+1)\ell+\alpha(t^2-2t+1)+\alpha+1< (t+1)\ell+\alpha(t^2)$.
		 Where the last inequality holds since $\alpha\geq 1$, and $t>2$.
		\item\label{case:secondLinear} For any $i\in [t]$, $I\cap A^i$ contains a node of weight $\ell$, denoted by $v^i_{m_i}$: This case is proved directly, without applying the inductive hypothesis, as follows. First, since the strings $x^1,\cdots, x^{t}$ are pairwise disjoint, it must be the case that for any $i\neq j\in [t]$, $m_i\neq m_j$. This is because $w(v^i_{m_i})=\ell$ if and only if $x^i_{m_i}=1$, and if $m_i=m_j$, it would imply that $x^i$ and $x^j$ are not disjoint. 
		Hence, by Corollary~\ref{cor:helper}, we have that $$w(I)\leq (t+1)\ell+\alpha t^2$$ As desired.
	\end{enumerate} 
\end{proof}

\begin{proof}[\textbf{Proof of Lemma~\ref{lem:t>2}}]
	
	Claims~\ref{claim:notDisjt>2} and ~\ref{claim:Disjt>2} imply that $\set{G_{\bar{x}}=(V,E_{\bar{x}},w_{\bar{x}})\mid \bar{x}\in\prod_{i=1}^t \{0,1\}^k}$ is a family of lower bound graphs with respect to the pairwise disjointness function and the graph predicate that distinguishes between graphs of maximum independent set at least $t(2\ell+\alpha)$ and graphs of maximum independent set at most $(t+1)\ell+\alpha\cdot t^2$. 
	
	Recall that $\ell=\log k-\log k/\log\log k, \alpha=\log k/\log\log k$. Which implies that the graph predicate distinguishes between independent sets of weight at least $2t(\log k - \log k/\log\log k+\log k/\log\log k)=2t\log k$ and independent sets of weight at most $(t+1)(\log k-\log k/\log\log k)+t^2(\log k/\log\log k)\leq (t+2)\log k$, for any constant $t$ and $k\gg t$. Hence, for any constant $\epsilon>0$, we choose $t=2/\epsilon$ (or the first integer larger than $2/\epsilon$, if it is not an integer). This implies that for any constant $\epsilon>0$, there is a constant $t$ for which $\set{G_{\bar{x}}=(V,E_{\bar{x}},w_{\bar{x}})\mid \bar{x}\in\prod_{i=1}^t \{0,1\}^k}$ is a $(1/2+\epsilon)$-approximate $MaxIS$ family of graphs. 
\end{proof}

\begin{proof}[\textbf{Proof of Theorem~\ref{thm:2MIS}}]
	
	Observe that $k=\Theta(n)$, where $n=|V|$. Furthermore, $\set{G_{\bar{x}}=(V,E_{\bar{x}},w_{\bar{x}})\mid \bar{x}\in\prod_{i=1}^t \{0,1\}^k}$ is a $(1/2+\epsilon)$-approximate $MaxIS$ family of graphs, where the partition of the set of nodes that is needed for Definition~\ref{def: LBgraphs} is $V=\bigcup_{i=1}^t V^i$. Hence, by Corollary~\ref{cor: hardness} and the fact that $\size{cut(G_{\bar{x}})}=t^2\log^2 k=\Theta(\log^2 k)$, any algorithm for finding a $(1/2+\epsilon)$-approximation for maximum independent set in the CONGEST model with success probability at least $2/3$ requires $\Omega (k/(t\log t\cdot \size{cut(G_{\bar{x}})}\log |V|))=\Omega (n/(t\log t\cdot \log^3n)=\Omega(n/\log^3 n)$ rounds.
\end{proof}

\begin{remark}\label{remarkUnweighted}
	\textup{While our hard instances in the proof of Theorem~\ref{thm:2MIS} are weighted, it is easy to extend the argument for unweighted graphs as well, by losing a logarithmic factor in the lower bound (in terms of the number of rounds), as follows. For every node $v$ of weight $\ell$, we replace $v$ by an independent set of size $\ell$, denoted by $I(v)$. For every node $u$ that is adjacent to $v$ in our construction, if $u$ is of weight $1$, we connect all the nodes in $I(v)$ to $u$. Otherwise, if $u$ is of weight $\ell$, it means that it is replaced by an independent set of size $\ell$, denoted by $I(u)$. We connect $I(v)$ to $I(u)$ by a bi-clique (a full bipartite graph). The proof that the converted construction yields a hardness of $(1/2+\epsilon)$-approximation follows from a similar case analysis to the one provided for the weighted case. Since the number of nodes in the unweighted construction in $n=\Theta(k\ell)=\Theta(k\log k)$ rather than $\Theta(k)$, in terms of the number of rounds, we lose a logarithmic factor in the lower bound compared to the weighted case.}
\end{remark}

\section{Quadratic Lower Bound}\label{sec:-quadrlowerbound}
In this section we prove the following theorem.
\begin{theorem-repeat}{thm:3/4MIS}
	\ThmMainQuad 
\end{theorem-repeat}

In order to prove Theorem~\ref{thm:3/4MIS}, we construct a $(3/4+\epsilon)$-approximate $MaxIS$ family of lower bound graphs $\set{F_{\bar{x}}=(V,E_{\bar{x}},w_{\bar{x}})\mid \bar{x}\in\prod_{i=1}^t \{0,1\}^{k^2}}$. Observe that unlike the previous section, the length of the strings in $\bar{x}$ is $k^2$ rather than $k$. In our graph construction, similarly to the previous section, $k=\Theta(n)$. Hence, having the length of the strings being $k^2$ allows us to achieve a near-quadratic lower bound. Our hard instances are weighted graphs, and we can extend our argument to unweighted graphs as well by losing a logarithmic factor in the lower bound (in terms of the number of rounds) in the same way as explained in Remark~\ref{remarkUnweighted}.
\subsection{The family of lower bound graphs}\label{ssec:-familyquad}
We begin with describing a fixed graph construction, $F=(V_F,E_F,w_F)$, and then we describe how to get from $F$ and a vector of strings $\bar{x}\in \prod_{i=1}^t \{0,1\}^{k^2}$ the graph $F_{\bar{x}}=(V,E_{\bar{x}},x_{\bar{x}})$. Let $G$ be the fixed graph construction defined in Section~\ref{ssec:-family_linear_fixed}. The fixed graph construction $F$ consists of exactly two copies of $G$, denoted by $G^1$ and $G^2$. Recall that $G=(V_G,E_G)$ where $V_G=\bigcup_{i=1}^t A^i\cup Code^i$. In order to distinguish between the sets of nodes that belong to $G^1$ and the sets of nodes that belong to $G^2$, we add an ordered pair as a superscript $(i,b)$, where $b\in \{1,2\}$ indicates whether the set is in $G^1$ or in $G^2$. That is, the set of nodes of $G^1$ is $V_{G^1}=\bigcup_{i=1}^t A^{(i,1)}\cup Code^{(i,1)}$, and the set of nodes of $G^2$ is $V_{G^2}=\bigcup_{i=1}^t A^{(i,2)}\cup Code^{(i,2)}$. Hence the set of nodes of $F$ is $V_F=\bigcup_{i=1}^t V^i$, where for any $i\in [t]$, we denote by 

\begin{align*}
	&V^i=V^{(i,2)}\cup V^{(i,2)}\\
	&V^{(i,1)} = A^{(i,1)}\cup Code^{(i,1)}\\
	&V^{(i,2)} = A^{(i,2)}\cup Code^{(i,2)}\\
	&A^{(i,1)}=\{v^{(i,1)}_{m}\mid m\in [k]\}\\
	&A^{(i,2)}=\{v^{(i,2)}_{m}\mid m\in [k]\}\\
	&Code^{(i,1)}=\bigcup_{h=1}^{\ell+\alpha} C^{(i,1)}_h\\
	&Code^{(i,2)}=\bigcup_{h=1}^{\ell+\alpha} C^{(i,2)}_h\\
	&C^{(i,1)}_h=\{\sigma^{(i,1)}_{(h,1)},\cdots, 	\sigma^{(i,1)}_{(h,\ell+\alpha)}\}\\
	&C^{(i,2)}_h=\{\sigma^{(i,2)}_{(h,1)},\cdots, \sigma^{(i,2)}_{(h,\ell+\alpha)}\}\\
	&Code^{(i,1)}_w=\{\sigma^{(i,1)}_{(h,w_h)}\mid h\in [\ell+\alpha]\}\\
	&Code^{(i,2)}_w=\{\sigma^{(i,2)}_{(h,w_h)}\mid h\in [\ell+\alpha]\}\\
\end{align*}

The weight function $w_F$ is defined as follows. For any $v\in V_F$,

$$
w_F(v)=
\begin{cases}
\ell & \mbox{if } v\in \bigcup_{i=1}^t A^{(i,1)}\cup A^{(i,2)}\\
1 & \mbox{otherwise}\\
\end{cases}
$$

That is, the weight of any node in the cliques $\bigcup_{i=1}^t A^{(i,1)}\cup A^{(i,2)}$ is $\ell$, and the weight of any node in the code-gadgets $\bigcup_{i=1}^t Code^{(i,1)}\cup Code^{(i,2)}$ is $1$. Observe that unlike the previous section, the weights of the nodes don't depend on the strings in  $\bar{x}$. See also Figures 4, 5, and 6, for illustrations. \begin{figure}                
	\centering
	\includegraphics[width=13cm,height=8cm]{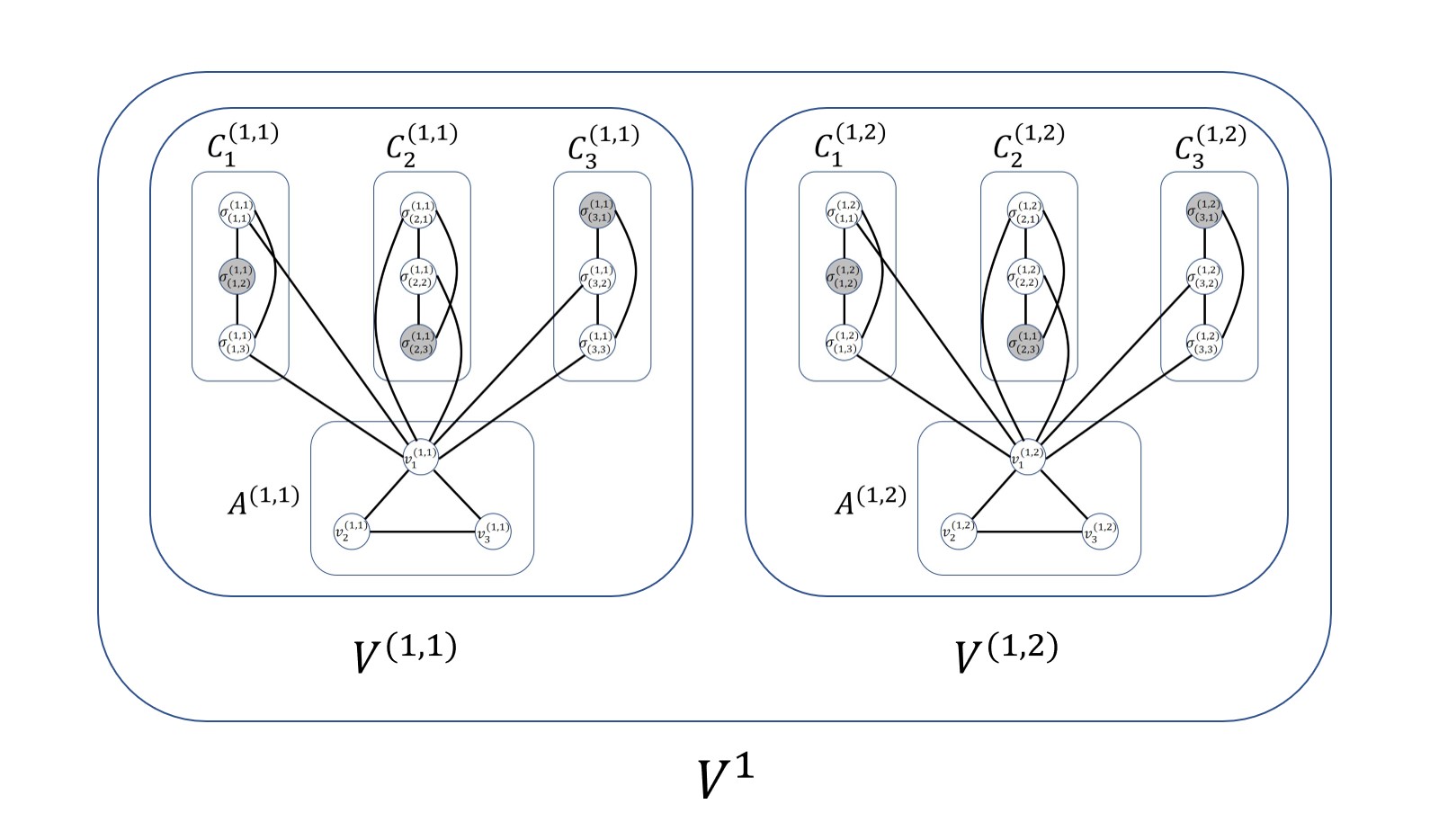}
	\label{fig:V1}
	\caption{An example of the graph induced by the nodes in $V^1$. As in the previous figures, $\ell=2,\alpha=1$ and $k=(\ell+\alpha)^{\alpha}=3$. $V^1$ contains two sets of nodes: $V^{(1,1)}$ which is in $G^1$, and $V^{(1,2)}$ which is in $G^2$. The graph induced by the nodes in $V^{(1,1)}$ has an identical topology to the graph induced by the nodes in $V^{(1,2)}$, and they are both identical to the topology of the base graph construction $H$ that was described is Section~\ref{ssec:-family_linear_fixed}. The reason that there is an ordered pair $(1,b)$, where $b\in \{1,2\}$ in a superscript in $V^{(1,1)}$ and $V^{(1,2)}$ is as follows. The first element in the pair indicates that these sets are parts of $V^1$, and the second element $b$ in the pair indicates that $V^{(1,b)}$ belongs to $G^{b}$. 
	Similarly, $V^{(1,1)}=A^{(1,1)}\cup Code^{(1,1)}=A^{(1,1)}\cup C^{(1,1)}_1\cup C^{(1,1)}_2\cup C^{(1,1)}_3$, and $V^{(1,2)}=A^{(1,2)}\cup Code^{(1,2)}=A^{(1,2)}\cup C^{(1,2)}_1\cup C^{(1,2)}_2\cup C^{(1,2)}_3$. As in the previous figures, the code-mapping of $1$, $\mathcal{C}(1)=``2,3,1"$, and therefore, $v^{(1,1)}_1$ is connected to all the nodes in $Code^{(1,1)}$ except of the nodes in $Code^{(1,1)}_1=\{\sigma^{(1,1)}_{(1,2)},\sigma^{(1,1)}_{(2,3)},\sigma^{(1,1)}_{(3,1)}\}$. Similarly, $v^{(1,2)}_1$ is connected to all the nodes in $Code^{(1,2)}$ except of the nodes in $Code^{(1,2)}_1=\{\sigma^{(1,2)}_{(1,2)},\sigma^{(1,2)}_{(2,3)},\sigma^{(1,2)}_{(3,1)}\}$. Some edges are omitted in this figure, for clarity. In particular, the existence of edges between $A^{(1,1)}$ and $A^{(1,2)}$ depends on the input string $x^1$, and these additional edges are illustrated in Figure 6. First, we illustrate in Figure 5 the full graph construction $F$ for $t=2$.}
\end{figure}

\begin{figure}                
	\centering
	\includegraphics[width=18cm,height=14cm]{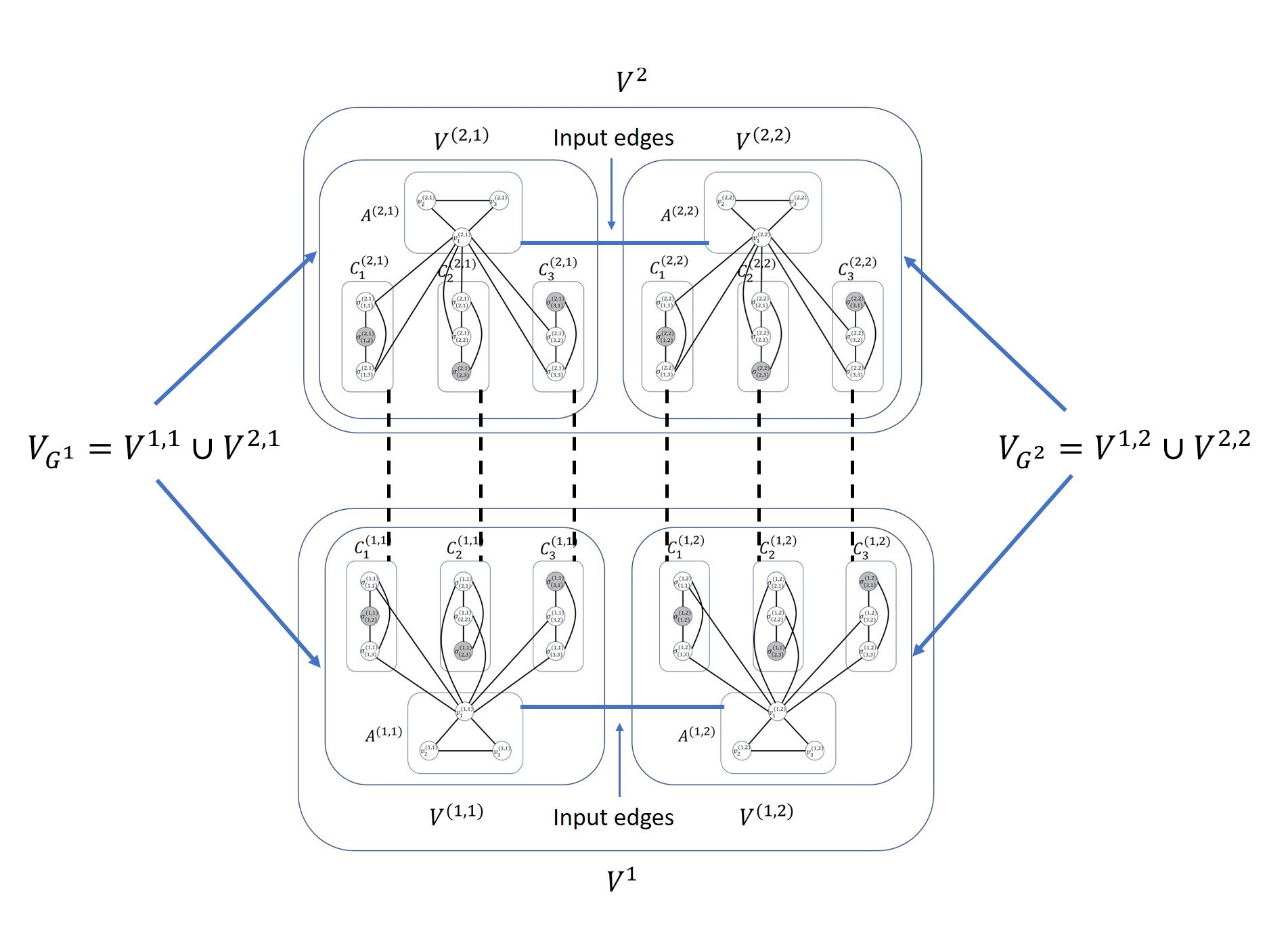}
	\label{fig:Quad2}
	\caption{An example for the full graph construction $F$ for $t=2$. Here, we also have $\ell=2$, $\alpha=1$ and $k=3$. There are two copies of the graph construction $G$ that was presented in Section~\ref{ssec:-family_linear_fixed}, $G^1$ and $G^2$, where the set of nodes of $G^1$ is $V_{G^1}=V^{(1,1)}\cup V^{(2,1)}$, and the set of nodes of $G^2$ is $V_{G^2}=V^{(1,2)}\cup V^{(2,2)}$. From the perspective of the two players, the set of nodes that is simulated by the first player is $V^1=V^{(1,1)}\cup V^{(1,2)}$, and the set of nodes that is simulated by the second player is $V^2=V^{(2,1)}\cup V^{(2,2)}$. For each $h\in [\ell+\alpha]=\{1,2,3\}$, and each $b\in \{1,2\}$, there is a dashed edge between $C^{(1,b)}_h$ and $C^{(2,b)}_h$, representing all the edges between $C^{(1,b)}_h$ and $C^{(2,b)}_h$, as explained in Figure~\ref{fig:Con}. All the edges in the graph $F$ are fixed and their existence doesn't depend on the input strings $\bar{x}=(x^1,x^2)$, \emph{except} of the edges between $A^{(1,1)}$ and $A^{(1,2)}$, that their existence depends on $x^1$, and the edges between $A^{(2,1)}$ and $A^{(2,2)}$, that there existence depends on $x^2$. The existence of these additional edges based on the input strings in illustrated in Figure 6. If $t=3$ instead of $2$, then the figure would have contained another set of nodes $V^3=V^{(3,1)}\cup V^{(3,2)}$, where $V^{(3,1)}$ is in $G^1$, and $V^{(3,2)}$ is in $G^2$, and the existence of edges between $A^{(3,1)}$ and $A^{(3,2)}$ depends on $x^3$.}
		
\end{figure}

\begin{figure}                
	\centering
	\includegraphics[width=13cm,height=8cm]{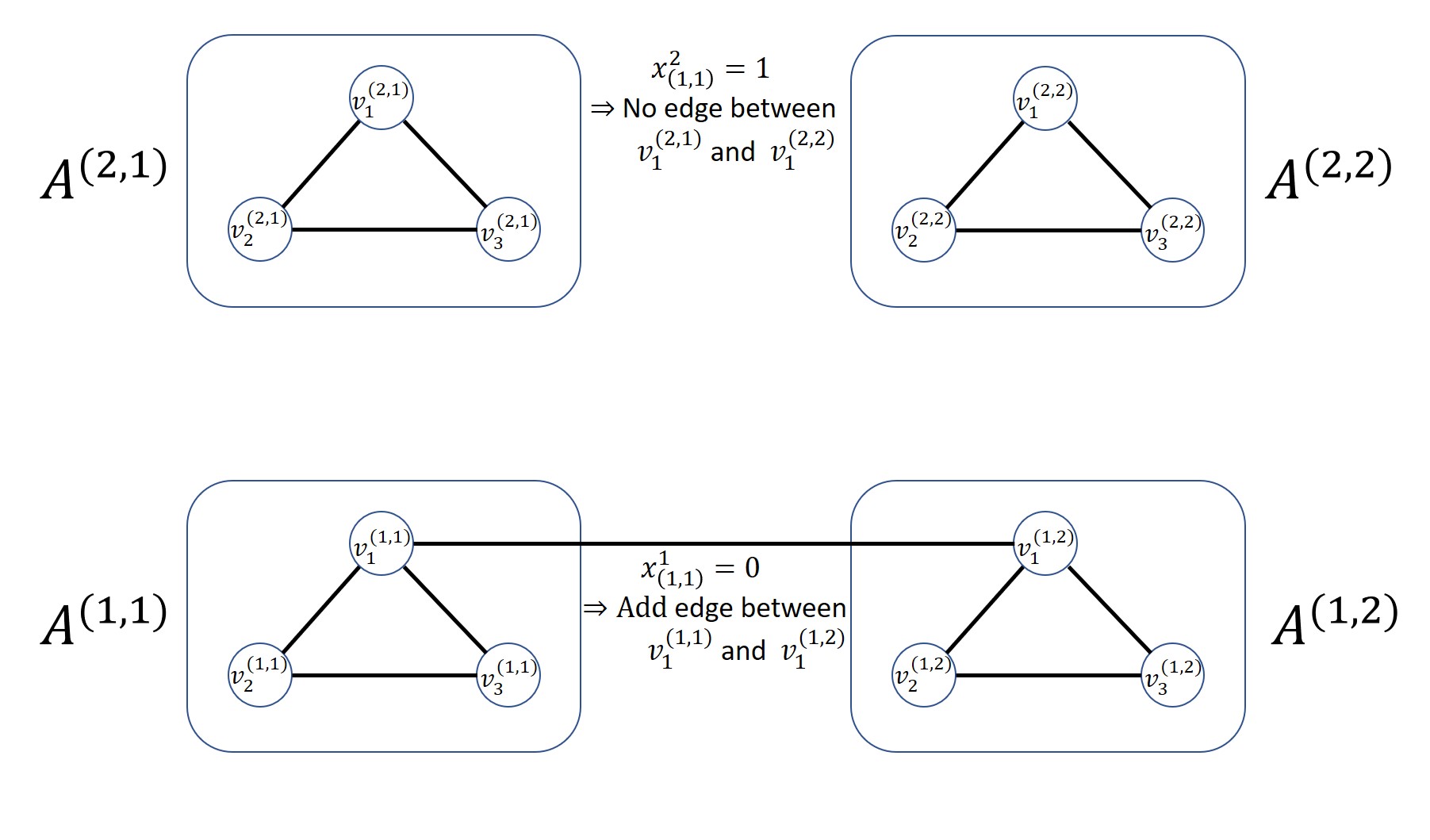}
	\label{fig:ConQuad}
	\caption{An illustration of the input edges. As in the other figures, we keep the example small and simple. Here, we also have $\ell=2$, $\alpha=1$, $k=3$ and $t=2$. We assume in this example that $\bar{x}=(x^1,x^2)$, where the  first bit in $x^1$ is $0$, and all the other bits are $1$. Furthermore, all the bits in $x^2$ are $1$. Hence, since the first bit in $x^1$ is indexed by $(1,1)$, and since its value is $0$, we add an edge between $v^{(1,1)}_1$ and $v^{(1,2)}_1$. Since all the bits in $x^2$ are $1$, we don't add any edges between $A^{(2,1)}$ and $A^{(2,2)}$.} 
\end{figure}

\paragraph{Obtaining $F_{\bar{x}}$ from $F=(V_F,E_F,w_F)$ and $\bar{x}$.} Let  $\bar{x}=(x^1,\cdots, x^t)\in \prod\limits _{i=1}^t \set{0,1}^{k^2}$. For any $x^i$, we index the $k^2$ positions in $x^i$ by $x^i_{(m_1,m_2)}$, for $m_1,m_2\in [k]$. The graph $F_{\bar{x}}$ is defined as follows. The set of nodes and the weight function remain exactly as in $F$. The set of edges contains all the edges in $F$, and the following edges in $A^{(i,1)}\times A^{(i,2)}$, for any $i\in [t]$. 
$$\{{v^{(i,1)}_{m_1}},v^{(i,2)}_{m_2}\mid x^i_{(m_1,m_2)}=0\}$$ 

That is, for any $i\in [t]$ and any $m_1,m_2\in [k]$, we add an edge between $v^{(i,1)}_{m_1}\in A^{(i,1)}$ and $v^{(i,2)}_{m_2}\in A^{(i,2)}$ if and only if $x^i_{(m_1,m_2)}=0$.
\subsection{$F_{\bar{x}}$ is a $(3/4+\epsilon)$-approximate $MaxIS$ family of lower bound graphs}

In this section we prove the  following lemma.
\begin{lemma}\label{lemma:-quad_family}
    For any constant $\epsilon>0$, there is a constant $t>2$ for which it holds that $\set{F_{\bar{x}}|\bar{x}\in \prod\limits _{i=1}^t \set{0,1}^{k^2}}$ is a $(3/4+\epsilon)$-approximate MaxIS family of lower bound graphs.
\end{lemma}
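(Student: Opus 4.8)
\textbf{Proof proposal for Lemma~\ref{lemma:-quad_family}.}
The plan is to mirror the structure of the linear lower bound: exhibit two regimes for the weight of a maximum independent set in $F_{\bar x}$ depending on whether the strings $x^1,\dots,x^t$ are uniquely intersecting or pairwise disjoint, and then choose $t$ and the code parameters so that the resulting gap is a factor of $(3/4+\epsilon)$. Concretely, I would prove two claims analogous to Claims~\ref{claim:notDisjt>2} and~\ref{claim:Disjt>2}. In the intersecting case, there is an index $(m_1,m_2)\in[k]\times[k]$ with $x^i_{(m_1,m_2)}=1$ for all $i\in[t]$, which means that for every $i$ the edge $\{v^{(i,1)}_{m_1},v^{(i,2)}_{m_2}\}$ is \emph{absent}; combining this with Property~\ref{property: independence} applied inside each copy $G^1,G^2$, the set $\bigcup_{i=1}^t\bigl(\{v^{(i,1)}_{m_1}\}\cup Code^{(i,1)}_{m_1}\cup\{v^{(i,2)}_{m_2}\}\cup Code^{(i,2)}_{m_2}\bigr)$ is independent and has weight $t\cdot 2(2\ell+\alpha)=2t(2\ell+\alpha)$. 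So the ``large'' side is at least $2t(2\ell+\alpha)$.

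For the pairwise-disjoint case I would bound the weight of an arbitrary independent set $I$ by $(2t+2)\ell+2\alpha t^2$ (twice the linear bound, up to lower-order terms). The key new ingredient is that within a single copy index $i$, if $I$ picks a weight-$\ell$ node $v^{(i,1)}_{m_1}$ in $A^{(i,1)}$ and a weight-$\ell$ node $v^{(i,2)}_{m_2}$ in $A^{(i,2)}$, then the edge $\{v^{(i,1)}_{m_1},v^{(i,2)}_{m_2}\}$ must be absent, i.e.\ $x^i_{(m_1,m_2)}=1$. Since the $x^i$ are pairwise disjoint, for $i\ne j$ the pairs $(m_1^i,m_2^i)$ and $(m_1^j,m_2^j)$ witnessing this are distinct as elements of $[k]\times[k]$; hence at least one of the coordinates differs, so either the ``first copies'' use $t'$ distinct indices in $G^1$ or the ``second copies'' use distinct indices in $G^2$ across some subset. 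More carefully, I would run the same induction on $t$ as in Claim~\ref{claim:Disjt>2}, but applied separately to $G^1$ and $G^2$: either some $V^{(i,b)}$ contributes only $\le\ell+\alpha+1$ to $w(I)$ and we recurse on $t-1$ players, or for every $i$ and every $b$, $I\cap A^{(i,b)}$ is a weight-$\ell$ node, in which case pairwise disjointness forces all $2t$ chosen indices to be ``compatible'' and Corollary~\ref{cor:helper} (applied once in $G^1$ with indices $m_1^1,\dots,m_1^t$ and once in $G^2$ with $m_2^1,\dots,m_2^t$, after checking these are distinct) gives $w(I)\le 2\bigl((t+1)\ell+\alpha t^2\bigr)$. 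Plugging $\ell=\log k-\log k/\log\log k$, $\alpha=\log k/\log\log k$ yields a gap between roughly $4t\log k$ and $(2t+4)\log k$, which is a $\bigl(\tfrac{2t+4}{4t}+o(1)\bigr)=\bigl(\tfrac12+\tfrac1t+o(1)\bigr)$-factor; choosing $t=\Theta(1/\epsilon)$ we land below $3/4+\epsilon$ only once $t\ge 4$, and more generally this construction proves a $(1/2+\epsilon)$-type bound — but since the strings now have length $k^2$, Corollary~\ref{cor: hardness} turns it into a near-quadratic round lower bound, and for the $(3/4+\epsilon)$ statement it suffices to take $t$ a suitable constant. Finally I would assemble the two claims into a family of lower bound graphs with respect to promise pairwise disjointness and the gap predicate, as in the proof of Lemma~\ref{lem:t>2}.

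The main obstacle I anticipate is the bookkeeping in the disjoint case: I must argue that when $I$ selects a weight-$\ell$ node in \emph{both} halves of copy $i$ for every $i$, the indices $m_1^1,\dots,m_1^t$ chosen in the $G^1$-halves are pairwise distinct (and likewise $m_2^1,\dots,m_2^t$ in the $G^2$-halves), so that Corollary~\ref{cor:helper} and Property~\ref{property:doubleIndices} apply verbatim in each copy of $G$. Pairwise disjointness of the length-$k^2$ strings only guarantees that the \emph{pairs} $(m_1^i,m_2^i)$ are distinct, not that the first coordinates alone are distinct. The fix is to handle this more delicately: if two first-coordinates coincide, say $m_1^i=m_1^j$, then $v^{(i,1)}_{m_1^i}$ and $v^{(j,1)}_{m_1^j}$ are vertices of $A^{(i,1)}$ and $A^{(j,1)}$ corresponding to the same code-word, so $Code^{(i,1)}_{m_1^i}$ and $Code^{(j,1)}_{m_1^j}$ are non-adjacent (the missing perfect matching), which actually \emph{helps} $I$ rather than hurting the bound — so I would instead bound the code-gadget contribution in $G^1$ by noting it is an independent set meeting $\le \ell+\alpha$ vertices per clique-column and using a Property~\ref{property:doubleIndices}-style count only over pairs $i\ne j$ with $m_1^i\ne m_1^j$; the pairs with $m_1^i=m_1^j$ are controlled because distinctness of $(m_1^i,m_2^i)$ then forces $m_2^i\ne m_2^j$, moving the saving to $G^2$. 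Making this trade-off precise and checking the constants still give the claimed gap is the delicate step; everything else follows the linear-lower-bound template.
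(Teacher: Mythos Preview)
Your intersecting case is fine and matches the paper. The problem is in the disjoint case: the bound $w(I)\le (2t+2)\ell+2\alpha t^2$ that you aim for is simply false, so the ``$(1/2+\epsilon)$-type'' conclusion cannot be reached with this construction. Here is the counterexample. Take pairwise disjoint strings with $x^i_{(1,i)}=1$ and all other bits $0$ (so all first coordinates $m_1^i$ coincide at $m_1^i=1$ while the $m_2^i=i$ are distinct). Then
\[
I \;=\; \bigcup_{i=1}^t\Bigl(\{v^{(i,1)}_{1}\}\cup Code^{(i,1)}_{1}\cup \{v^{(i,2)}_{i}\}\Bigr)
\]
is independent: inside $G^1$ all chosen code-sets correspond to the \emph{same} codeword, so across players they sit on the deleted perfect matching and are mutually non-adjacent; across the two copies there are no edges except the input edges, and those are absent since $x^i_{(1,i)}=1$. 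Its weight is already $t\ell+t(\ell+\alpha)+t\ell=3t\ell+t\alpha$, exceeding $(2t+2)\ell+2\alpha t^2$ for large $t$ and $\ell\gg\alpha$.

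This is exactly the scenario your last paragraph flags, but your proposed fix misreads it: when $m_1^i=m_1^j$ you say the saving ``moves to $G^2$'', which is true, but you implicitly assume there is still \emph{some} saving in $G^1$ to be moved. There isn't. Equal first coordinates make the $G^1$ code gadgets fully compatible, so $G^1$ alone can contribute $\approx 2t\ell$, and the best you can get from $G^2$ via Corollary~\ref{cor:helper} is another $(t+1)\ell$, totalling $\approx 3t\ell$. The paper's proof accepts this: it partitions $\{m_1^i\}$ into $r$ equivalence classes of equal values, applies Corollary~\ref{cor:helper} once to the $r$ distinct class representatives in $G^1$ and once per class in $G^2$ (where the $m_2$'s are forced distinct), and pays the trivial $2\ell$ per duplicated $G^1$ slot, arriving at $w(I)\le (3t+1)\ell+3\alpha t^3$. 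That gives the gap ratio $\tfrac{3t+1}{4t}\to 3/4$, which is exactly the $(3/4+\epsilon)$ statement and is tight for this construction. To complete the lemma you need this equivalence-class bookkeeping; the two-fold application of Corollary~\ref{cor:helper} you sketch does not go through.
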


Lemma~\ref{lemma:-quad_family} is a corollary of Claims~\ref{claim:-quad_not_disjoint} and~\ref{claim:-quad_disjoint}.

\begin{claim}\label{claim:-quad_not_disjoint}
    For any $g_{\bar{x}}\in\set{F_{\bar{x}}|\bar{x}\in \prod\limits _{i=1}^t \set{0,1}^{k^2}}$, if there is a pair $(m_1,m_2)\in [k]\times [k]$ for which it holds that $x^1_{(m_1,m_2)}=x^2_{(m_1,m_2)}=\cdots =x^t_{(m_1,m_2)}=1$, then $g_{\bar{x}}$ contains an independent set of weight at least $4t\ell+2\alpha t$.
\end{claim}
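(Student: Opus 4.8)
The plan is to mirror the argument of Claim~\ref{claim:notDisjt>2} from the linear construction, but now exploiting the fact that $F_{\bar{x}}$ contains \emph{two} copies $G^1,G^2$ of the linear construction and that an all-ones coordinate $(m_1,m_2)$ certifies, for every player $i$, that the edge $\{v^{(i,1)}_{m_1},v^{(i,2)}_{m_2}\}$ is \emph{absent} from $F_{\bar{x}}$. First I would fix the witnessing pair $(m_1,m_2)\in[k]\times[k]$ with $x^i_{(m_1,m_2)}=1$ for all $i\in[t]$, and propose the candidate independent set
\[
I=\bigcup_{i=1}^t\Big(\{v^{(i,1)}_{m_1},v^{(i,2)}_{m_2}\}\cup Code^{(i,1)}_{m_1}\cup Code^{(i,2)}_{m_2}\Big).
\]
The weight bookkeeping is immediate: each $v^{(i,b)}$-type node has weight $\ell$, contributing $2t\ell$, and each $Code^{(i,b)}_{\cdot}$ set contains exactly $\ell+\alpha$ weight-$1$ nodes, contributing $2t(\ell+\alpha)$, for a total of $4t\ell+2\alpha t$ as claimed.

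The substance is verifying independence. I would break this into three types of potential edges. (i) Edges \emph{inside a single copy} $G^b$ among the nodes of player $i$: within $G^b$ restricted to $V^{(i,b)}$ the set $\{v^{(i,b)}_{m}\}\cup Code^{(i,b)}_m$ is independent for the same reason as in Property~\ref{property: independence} — $v^{(i,b)}_m$ is joined only to $Code^{(i,b)}\setminus Code^{(i,b)}_m$. (ii) Edges \emph{across players but inside one copy} $G^b$: here $G^b$ is literally a relabelled instance of the fixed graph $G$ of Section~\ref{ssec:-family_linear_fixed} with the single index $m_b$ (meaning $m_1$ when $b=1$, $m_2$ when $b=2$) playing the role of the common index, so Property~\ref{property: independence} applied to $G^b$ gives that $\bigcup_{i=1}^t(\{v^{(i,b)}_{m_b}\}\cup Code^{(i,b)}_{m_b})$ is independent; in particular there are no $A$–$A$ or $A$–$Code$ edges across players, and the $Code$–$Code$ edges across players avoid the natural perfect matching, which is exactly the set of pairs with equal clique-position, i.e.\ the pairs realized by a common code-word. (iii) The \emph{new input edges} in $A^{(i,1)}\times A^{(i,2)}$: the only such edge that could land in $I$ would be $\{v^{(i,1)}_{m_1},v^{(i,2)}_{m_2}\}$ for some $i$, but by construction that edge is present iff $x^i_{(m_1,m_2)}=0$, which is ruled out by our choice of $(m_1,m_2)$. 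I would also note there are no edges between $G^1$ and $G^2$ other than these input edges, so nothing else can go wrong.

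The main obstacle — really the only point requiring care — is bookkeeping the edge types correctly, in particular being sure that the cross-player cross-copy pairs (a node of $G^1$ and a node of $G^2$ belonging to different players) carry \emph{no} edges in $F$: the construction only adds cross-copy edges within the same player $i$ (the input edges in $A^{(i,1)}\times A^{(i,2)}$), so this is automatic, but it must be stated. Once these three cases are dispatched, $I$ is an independent set of the asserted weight, completing the proof. I would close by remarking that this weight $4t\ell+2\alpha t$ is (up to lower-order terms) twice the weight $2t\ell+t\alpha$ appearing in the linear construction, which is the source of the eventual gap against $(3/4+\epsilon)$ once Claim~\ref{claim:-quad_disjoint} supplies the matching upper bound in the pairwise-disjoint case.
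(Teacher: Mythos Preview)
Your proposal is correct and follows essentially the same approach as the paper: the same candidate set $I$, the same appeal to Property~\ref{property: independence} within each copy $G^b$, and the same observation that the only cross-copy edges are the input edges in $A^{(i,1)}\times A^{(i,2)}$, which are absent precisely because $x^i_{(m_1,m_2)}=1$. Your three-case breakdown (within-player within-copy, across-player within-copy, cross-copy input edges) is just a slightly more explicit unpacking of what the paper compresses into two sentences.
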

\begin{proof}
     Consider the following set of nodes. 
     \[
     I=\bigcup\limits_{i=1}^t \{v^{(i,1)}_{m_1}\} \cup Code^{(i,1)}_{m_1}\cup \{v^{(i,2)}_{m_2}\}\cup Code^{(i,2)}_{m_2}
     \]
     First, by Property~\ref{property: independence}, it holds that both $\bigcup\limits_{i=1}^t \set{v^{(i,1)}_{m_1}} \cup Code^{(i,1)}_{m_1}$ and $\bigcup\limits_{i=1}^t \set{v^{(i,2)}_{m_2}} \cup Code^{(i,2)}_{m_2}$ are independent sets. Furthermore, the only possible edges between $\bigcup\limits_{i=1}^t \set{v^{(i,1)}_{m_1}} \cup Code^{(i,1)}_{m_1}$ and $\bigcup\limits_{i=1}^t \set{v^{(i,2)}_{m_2}} \cup Code^{(i,2)}_{m_2}$ are the ones in $\{\{v^{(i,1)}_{m_1},v^{(i,2)}_{m_2}\}\mid i\in [t] \}$. But since $x^1_{(m_1,m_2)}=x^2_{(m_1,m_2)}=\cdots =x^t_{(m_1,m_2)}=1$, none of the edges in  $\{\{v^{(i,1)}_{m_1},v^{i_2}_{m_2}\}\mid i\in [t] \}$ exists in the graph $F_{\bar{x}}$. The weight of $I$ is $|\bigcup_{i=1}^t w(\{v^{(i,1)}_{m_1},v^{(i,2)}_{m_2}\})|+|\bigcup_{i=1}^t w(Code^{(i,1)}_{m_1}\cup Code^{(i,2)}_{m_2})|= 2t\ell+2t(\ell+\alpha)=t(4\ell+\alpha)$, as desired.
\end{proof}

\begin{claim}\label{claim:-quad_disjoint}
For any $g_{\bar{x}}\in\set{G_{\bar{x}}|\bar{x}\in \prod\limits _{i=1}^t \set{0,1}^{k^2}}$, if the strings $x^1,x^2,\cdots ,x^t$ are pairwise disjoint, then the weight of any independent set in $g_{\bar{x}}$ is at most $3(t+1)\ell+3\alpha t^3$. 
\end{claim}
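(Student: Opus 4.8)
\textbf{Proof proposal for Claim~\ref{claim:-quad_disjoint}.}

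The plan is to mimic the structure of the linear lower bound (Claim~\ref{claim:Disjt>2}), but now with two coupled copies $G^1,G^2$ of the linear construction. The key new ingredient is the input edges between $A^{(i,1)}$ and $A^{(i,2)}$: since $x^i$ (viewed as a $k\times k$ matrix) is pairwise disjoint with $x^j$ for $j\neq i$, I want to use the input edges to argue that an independent set $I$ cannot simultaneously pick a heavy node $v^{(i,1)}_{m_1}$ in $A^{(i,1)}$ and a heavy node $v^{(i,2)}_{m_2}$ in $A^{(i,2)}$ for \emph{the same} $i$ in a way that is consistent across all players — because that would force the all-ones column that the pairwise-disjointness promise forbids. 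First I would set up: $I\cap A^{(i,b)}$ has size at most $1$ for every $i\in[t], b\in\{1,2\}$ (cliques), and $I$ contains at most $\ell+\alpha$ nodes of each of the $2t$ code-gadgets $Code^{(i,b)}$.

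The main step is a case analysis on how $I$ meets the heavy cliques, done by induction on $t$ (base cases $t=1,2$ handled by inspection, roughly as in Claim~\ref{claim:Disj} applied to each copy). For the inductive step, fix $I$. If there is some $i\in[t]$ such that for \emph{both} $b=1$ and $b=2$ the set $I\cap A^{(i,b)}$ is empty or carries only weight $1$, then $w(I\cap V^i)\le 2(\ell+\alpha+1)$, and applying the inductive hypothesis to the graph induced on $\bigcup_{j\neq i}V^j$ gives $w(I)\le 3t\ell+3\alpha(t-1)^3+2\ell+2\alpha+2$, which is at most $3(t+1)\ell+3\alpha t^3$ since $3(t-1)^3 + 2 \le 3t^3 - 3$ for $t>2$. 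Otherwise, for \emph{every} $i\in[t]$ at least one of the two copies contributes a heavy node; I would then split further. The cleaner sub-case is when for every $i$ there are heavy nodes $v^{(i,1)}_{m_1^i}$ and $v^{(i,2)}_{m_2^i}$ in \emph{both} copies. Then, because within each copy $G^b$ the $t$ heavy nodes $v^{(1,b)}_{m^1_b},\dots,v^{(t,b)}_{m^t_b}$ behave exactly like the heavy-node pattern analyzed by Corollary~\ref{cor:helper} (pairwise disjointness of the first, resp. second, "coordinate" of the $x^i$'s forces the $m^i_b$ to be distinct across $i$ for each fixed $b$), I can apply Corollary~\ref{cor:helper} separately to $G^1$ and to $G^2$ to bound $w(I\cap V_{G^1})\le (t+1)\ell+\alpha t^2$ and likewise for $G^2$, giving $w(I)\le 2(t+1)\ell+2\alpha t^2\le 3(t+1)\ell+3\alpha t^3$. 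The remaining sub-case — some players heavy in exactly one copy and some in both — is handled by peeling off a "heavy in both" player and recursing, or by observing that a player heavy in only one copy contributes at most $\ell+(\ell+\alpha)+(\ell+\alpha)=3\ell+2\alpha$ total (one heavy node plus two code-gadgets), which is already within the per-player budget $3\ell+3\alpha t^2$ implicit in the target bound, so such players are "free" and one reduces to a pure instance of the both-copies sub-case on the remaining players via the inductive hypothesis.

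Why pairwise disjointness is exactly what is needed: if two players $i\neq j$ both had heavy nodes $v^{(i,1)}_{m}$ and $v^{(j,1)}_{m}$ at the \emph{same} index $m$ in copy $1$, that would say $x^i$ has a $1$ somewhere in row $m$ and $x^j$ has a $1$ somewhere in row $m$, which is not yet a contradiction — so I must be careful and instead invoke disjointness at the level of Corollary~\ref{cor:helper}, which only needs the indices $m^i$ to be distinct, and that distinctness follows from: if $m^i_1=m^j_1$ then reading off the unique columns where $x^i$ and $x^j$ have $1$'s in that row would show $x^i,x^j$ share a $1$-entry, contradicting disjointness of $x^i$ and $x^j$. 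So the input edges themselves are not what drives the disjoint case; they matter in the \emph{intersecting} case (Claim~\ref{claim:-quad_not_disjoint}) to license the large independent set, and here they only ever add edges, which can only shrink independent sets, so I may safely ignore them for the upper bound and argue on $F$ rather than $F_{\bar x}$ wherever convenient.

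\textbf{Main obstacle.} The delicate point is the bookkeeping in the mixed sub-case (some players heavy in both copies, some in one, some in neither) — making sure the per-player budgets add up to exactly $3(t+1)\ell+3\alpha t^3$ and that the $\alpha t^3$ slack genuinely absorbs the $\alpha t^2$ terms coming from each of the two copies plus the inductive residue $3\alpha(t-1)^3$; the inequality $2\alpha t^2 + 3\alpha(t-1)^3 + O(\alpha) \le 3\alpha t^3$ for $t>2$ should hold comfortably, but the cleanest write-up will want to fix the case split so that Corollary~\ref{cor:helper} is applied to each copy with the \emph{full} player set whenever possible and the induction is used only to strip genuinely light players. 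I expect that once the case division is organized as "for each $i$, classify by $(\,$heavy in copy $1$?$\,,\,$heavy in copy $2$?$\,)$" and one shows the all-heavy-in-one-fixed-copy indices are distinct via disjointness, the arithmetic is routine.
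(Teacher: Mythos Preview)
Your proposal has a genuine gap in the ``all heavy'' sub-case (your B1). You claim that pairwise disjointness of the $k^2$-bit strings forces the first coordinates $m^1_1,\dots,m^1_t$ to be pairwise distinct (and likewise the second coordinates), so that Corollary~\ref{cor:helper} applies to each copy $G^b$ separately. This is false. From $v^{(i,1)}_{m^1_i},v^{(i,2)}_{m^2_i}\in I$ and the input-edge rule you get $x^i_{(m^1_i,m^2_i)}=1$, and pairwise disjointness then forces only the \emph{pairs} $(m^1_i,m^2_i)$ to be distinct across $i$; it says nothing about the individual coordinates. Your justification (``if $m^1_i=m^1_j$ then reading off the unique columns where $x^i$ and $x^j$ have $1$'s in that row would show $x^i,x^j$ share a $1$-entry'') conflates ``same row'' with ``same position'' and also tacitly assumes each row of $x^i$ carries a unique $1$, which is not part of the promise. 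Concretely, $(m^1_1,m^2_1)=(3,5)$ and $(m^1_2,m^2_2)=(3,7)$ is perfectly compatible with disjointness, yet $m^1_1=m^1_2$; in that situation $Code^{(1,1)}_{3}\cup Code^{(2,1)}_{3}$ is an independent set and the $(t+1)\ell+\alpha t^2$ bound on copy $G^1$ simply fails. Relatedly, your remark that the input edges ``only add edges'' and can be ignored for the upper bound is misleading: the claimed bound is \emph{false} in the fixed graph $F$ (take all $m^1_i$ equal and all $m^2_i$ equal to get weight $2t(2\ell+\alpha)$), so the input edges are exactly what ties the choices in the two copies together and makes the disjointness hypothesis bite. (Minor: in this construction every clique node has weight $\ell$, so there are no ``weight-$1$'' clique nodes to speak of.)

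What the paper actually does in Case~2 is precisely the repair your argument needs: it partitions the multiset $\{m^1_i:i\in[t]\}$ into equivalence classes $Q_1,\dots,Q_r$ by value and splits $V$ into three pieces. One representative per class gives $r$ players whose first coordinates \emph{are} distinct, so Corollary~\ref{cor:helper} bounds their contribution in $G^1$ by $(r+1)\ell+\alpha t^2$; the remaining $t-r$ players in $G^1$ are bounded trivially by $2\ell(t-r)+\alpha(t-r)$; and within each class $Q_j$ the second coordinates $m^2_i$ \emph{are} distinct (because the pairs are), so Corollary~\ref{cor:helper} applied class-by-class in $G^2$ yields $\sum_j (q_j+1)\ell+\alpha q_j^2\le (t+r)\ell+\alpha t^3$. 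Summing gives $(3t+1)\ell+3\alpha t^3$, and the $r$'s cancel. Your inductive peeling of single-heavy players is fine and matches the paper's Case~1, but the core combinatorial step---handling repeated first coordinates via this equivalence-class decomposition---is missing from your proposal.
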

\begin{proof}
	The proof is by induction on $t$. For $t=1$, observe that $w(I\cap V^1)=w(I\cap (V^{(1,1)}\cup V^{(1,2)}))=w(I\cap (A^{(1,1)}\cup Code^{(1,1)}\cup A^{(1,2)}\cup Code^{(1,2)}))\leq 4\ell+2\alpha$.
	We assume correctness for $t-1$, and we prove correctness for $t$. Let $I$ be an independent set in $g_{\bar{x}}$. The proof is by the following case analysis. 
	\begin{enumerate}
		\item There is some $i\in [t]$, for which it holds that $|I\cap (A^{(i,1)}\cup A^{(i,2)})|\leq 1$: In this case, observe that $w(I\cap V^i)=w(I\cap (A^{(i,1)}\cup A^{(i,2)}\cup Code^{(i,1)}\cup Code^{(i,2)}))= w(I\cap (A^{(i,1)}\cup A^{(i,2)}))+w(I\cap (Code^{(i,1)}\cup Code^{(i,2)}))\leq \ell+2(\ell+\alpha)$. Hence, by applying the inductive hypothesis on the graph induced by the nodes in $\bigcup_{j\in [t]\setminus \{i\}}V^j$, we deduce that $w(I)= w(I\cap \bigcup_{j\in [t]\setminus \{i\}} V^j)+w(I\cap V^i)\leq 3t\ell+3\alpha (t-1)^3+3\ell+2\alpha<3(t+1)\ell+3\alpha t^3$.
		\item For all $i\in [t]$, it holds that $|I\cap (A^{(i,1)}\cup A^{(i,2)})|=2$: This case is proved without applying the inductive hypothesis, as follows. Fist, since $A^{(i,1)}$ and $A^{(i,2)}$ are cliques, there is one node in $I\cap A^{(i,1)}$ and one node in $I\cap A^{(i,2)}$. Denote these two nodes by $v^{(i,1)}_{m^1_i}\in A^{(i,1)}$ and $v^{(i,2)}_{m^2_i}\in A^{(i,2)}$, where $m^1_i,m^2_i\in [k]$. This implies that $v^{(i,1)}_{m^1_i}$ and $v^{(i,2)}_{m^2_i}$ are not connected by an edge. Since the strings $x^1,\cdots, x^t$ are pairwise disjoint, it must be the case that all the pairs in $\{(m^1_i,m^2_i)\mid i\in [t]\}$ are distinct. 
		
		We split the multiset of indices $\{m^1_i\mid i\in [t]\}$ into \emph{equivalence classes} by their value, where each class contains a set of indices of the same value. Observe that there are positive integers $r,q_1,q_2,\cdots,q_r$ satisfying $\sum_{j=1}^r q_j =t$, for which we can split $\{m^1_i\mid i\in [t]\}$ into $r$ equivalence classes $Q_1,\cdots , Q_r$, where $|Q_j|=q_j$. Let $s_i=\sum_{j=1}^i q_j$.\footnote{For example, if the multiset is $\{1, 1, 2, 3, 3, 3, 5\}$, then we have $r = 4, q_1 = 2, q_2 = 1, q_3 = 3, q_4 = 1$} Assume without loss of generality that 
		
		\begin{align*}
		&Q_1=\{m^1_1,\cdots, m^1_{s_1}\}\\
		&Q_2=\{m^1_{s_1+1},\cdots, m^1_{s_2}\}\\
		&Q_3=\{m^1_{s_2+1},\cdots, m^1_{s_3}\}\\
		&\vdots\\
		&Qr=\{m^1_{s_{r-1}+1},\cdots, m^1_{t}\}
		\end{align*}
		where
		\begin{align*}
		&m^1_1=\cdots= m^1_{s_1}\\ &m^1_{s_1+1}=\cdots= m^1_{s_2}\\
		&m^1_{s_2+1}=\cdots= m^1_{s_3}\\
		&\vdots\\
		&m^1_{s_{r-1}+1}=\cdots= m^1_{t}
		\end{align*}
		That is, we are assuming without loss of generality that $Q_1$ contains the first $s_1=q_1$ indices in $\{m^1_i\mid i\in [t]\}$,   $Q_2$ contains the next $q_2$ indices in $\{m^1_i\mid i\in [t]\}$, etc. This assumption is indeed without loss of generality because we can always split $\{m^1_i\mid i\in [t]\}$ into $r$ equivalence classes by their values, for some positive integer $r$, and our proof doesn't depend on the actual elements in each class. Since the pairs in $\{(m^1_i,m^2_i)\mid i\in [t]\}$ are distinct, it must be the case that 
		\begin{align*}
		&m^2_1\neq\cdots\neq m^2_{s_1}\\ &m^2_{s_1+1}\neq\cdots\neq m^2_{s_2}\\
		&m^2_{s_2+1}\neq\cdots\neq m^2_{s_3}\\
		&\vdots\\
		&m^2_{s_{r-1}+1}\neq\cdots\neq m^2_{t}
		\end{align*}
		
		The idea of the proof is to split the set of nodes into 3 disjoint sets, where the intersection of the independent set with each of the sets has small weight, as follows (we set $s_0=0$). 
		
		\begin{align*}
		&V=\bigcup_{i=1}^t V^{(i,1)}\cup V^{(i,2)}=\overbrace{\left(\bigcup_{j=0}^{r-1}  V^{(s_j+1,1)})\right)}^\text{First set}\cup \overbrace{\left(\bigcup_{j=1}^r (\bigcup_{i=s_{j-1}+2}^{s_j} V^{(i,1)})\right)}^\text{Second set}\cup\overbrace{\left(\bigcup_{j=1}^r (\bigcup_{i=s_{j-1}+1}^{s_j} V^{(i,2)})\right)}^\text{Third set}\\
		\end{align*} 
		In Propositions~\ref{prop:firstSet},~\ref{prop:secondSet}, and~\ref{prop:thirdSet}, we show that the intersection of the independent set with each of the three sets has small weight, and therefore, in total, the weight of the independent set is sufficiently small.
		\begin{proposition}\label{prop:firstSet}
			It holds that $$w\left(I\cap(\bigcup_{j=0}^{r-1}V^{(s_j+1,1)})\right)\leq (r+1)\ell+\alpha t^2$$
		\end{proposition}
		\begin{proof}
			Since $m^1_1,m^1_{s_1+1},\cdots, m^1_{s_{r-1}+1}$ are in different equivalence classes, they are distinct. Hence, by applying Corollary~\ref{cor:helper}, we have that
			
			\begin{align*}
			&w\left(I\cap(\bigcup_{j=0}^{r-1}V^{(s_j+1,1)})\right)\leq (r+1)\ell + \alpha r^2\leq (r+1)\ell + \alpha t^2
			\end{align*}
		\end{proof}
		\begin{proposition}\label{prop:secondSet}
			It holds that
			$$w\left(I \cap \bigcup_{j=1}^r (\bigcup_{i=s_{j-1}+2}^{s_j} V^{(i,1)})\right)\leq 2\ell(t-r)+\alpha (t-r)$$
		\end{proposition}
		\begin{proof}
			Since for any $i\in[t]$, $A^{(i,1)}$ is a clique, and $Code^{(i,1)}$ is a union of $\ell+\alpha$ cliques, we have that 
			
			\begin{align*}
			&w\left(\bigcup_{j=1}^r (\bigcup_{i=s_{j-1}+2}^{s_j} V^{(i,1)})\right)=\sum_{j=1}^r \sum_{i=s_{j-1}+2}^{s_j} w(I\cap V^{(i,1)})\\
			&=\sum_{j=1}^r \sum_{i=s_{j-1}+2}^{s_j} w\left(I\cap (A^{(i,1)}\cup Code^{(i,1)})\right)\\
			&\leq \sum_{j=1}^r \sum_{i=s_{j-1}+2}^{s_j} 2\ell+\alpha\\
			&=\sum_{j=1}^r 2\ell(|Q_j|-1)+\alpha(|Q_j|-1)\\
			&=\sum_{j=1}^r 2\ell(q_j-1)+\alpha(q_j-1)\\
			&=2\ell(\sum_{j=1}^r q_j)+2\ell(\sum_{j=1}^r-1)+\alpha(\sum_{j=1}^r q_j-1)\\
			&\leq 2\ell(t-r)+\alpha (t-r) 
			\end{align*}
			where the final inequality holds because $\sum_{j=1}^r q_j=t$.
		\end{proof}
	
		\begin{proposition}\label{prop:thirdSet}
			It holds that
			$$w\left(I \cap \bigcup_{j=1}^r (\bigcup_{i=s_{j-1}+1}^{s_j} V^{(i,2)})\right)\leq (t+r)\ell+\alpha t^3$$
		\end{proposition}
		\begin{proof}
			Since for any $j\in [r]$, it holds that $m^2_{s_{j-1}+1}\neq\cdots\neq m^2_{s_j}$. We can apply Corollary~\ref{cor:helper} on the graph induced by the nodes in $\bigcup_{i=s_{j-1}+1}^{s_j}V^{(i,2)}$ to deduce that
			\begin{align*}
			&w(I\cap (\bigcup_{i=s_{j-1}+1}^{s_j}V^{(i,2)}))\leq (|Q_j|+1)\ell+\alpha (|Q_j|)^2\\
			&=(q_j+1)\ell+\alpha {q_j}^2
			\end{align*}
			Hence, we have that
			\begin{align*}
			&w\left(I\cap\bigcup_{j=1}^r (\bigcup_{i=s_{j-1}+1}^{s_j} V^{(i,2)})\right)=\sum_{j=1}^r w(I\cap (\bigcup_{i=s_{j-1}+1}^{s_j}V^{(i,2)}))\leq\sum_{j=1}^r (q_j+1)\ell+\alpha q_j^2\\
			&\leq (t+r)\ell+\alpha t^2r\leq (t+r)\ell+\alpha t^3
			\end{align*}
		\end{proof}

		In total, 
		
		\begin{align*}
		&w(I)=w\left(I\cap(\bigcup_{j=0}^{r-1}V^{(s_j+1,1)})\right)+w\left(I\cap\bigcup_{j=1}^r (\bigcup_{i=s_{j-1}+2}^{s_j} V^{(i,1)})\right)+w\left(I\cap\bigcup_{j=1}^r (\bigcup_{i=s_{j-1}+1}^{s_j} V^{(i,2)})\right)\\
		&\leq (r+1)\ell + \alpha t^2+2\ell(t-r)+\alpha (t-r)+(t+r)\ell +\alpha t^3\\
		&\leq \ell(r+1+2t-2r+t+r)+3\alpha t^3\\
		&= \ell(3t+1)+3\alpha t^3\\
		\end{align*}
	As desired.
	\end{enumerate}
\end{proof}

\begin{proof}[\textbf{Proof of Lemma~\ref{lemma:-quad_family}}]
	
	Claims~\ref{claim:-quad_not_disjoint} and ~\ref{claim:-quad_disjoint} imply that $\set{G_{\bar{x}}=(V,E_{\bar{x}},w_{\bar{x}})\mid \bar{x}\in\prod_{i=1}^t \{0,1\}^{k^2}}$ is a family of lower bound graphs with respect to the pairwise disjointness function and the graph predicate that distinguishes between graphs of maximum independent set at least $4t\ell+2\alpha t$ and graphs of maximum independent set at most $3(t+1)\ell+3\alpha t^3$. 
	
	Recall that $\ell=\log k-\log k/\log\log k, \alpha=\log k/\log\log k$. Which implies that the graph predicate distinguishes between independent sets of weight at least $4t(\log k - \log k/\log\log k)+2\log k/\log\log k=4t\log k-2t\log k/\log\log k\geq 4(t-1)\log k$ and independent sets of weight at most $3(t+1)(\log k-\log k/\log\log k)+t^2(\log k/\log\log k)\leq 3(t+2)\log k$, for any constant $t$ and $k\gg t$. Hence, for any constant $\epsilon>0$, we choose $t=(3/4\epsilon)-1$ (or the first integer larger than $t=(3/4\epsilon)-1$, if it is not an integer). This implies that for any constant $0< \epsilon\leq 1/4$, there is a constant $t$ for which $\set{G_{\bar{x}}=(V,E_{\bar{x}},w_{\bar{x}})\mid \bar{x}\in\prod_{i=1}^t \{0,1\}^{k^2}}$ is a $(3/4+\epsilon)$-approximate $MaxIS$ family of graphs. 
\end{proof}

\begin{proof}[\textbf{Proof of Theorem~\ref{thm:3/4MIS}}]
	Observe that $k=\Theta(tn)=\Theta(n)$, where $n=|V|$. Furthermore, by Lemma~\ref{lemma:-quad_family},  $\set{G_{\bar{x}}=(V,E_{\bar{x}},w_{\bar{x}})\mid \bar{x}\in\prod_{i=1}^t \{0,1\}^{k^2}}$ is a $(3/4+\epsilon)$-approximate $MaxIS$ family of graphs, where the partition of the set of nodes that is needed for Definition~\ref{def: LBgraphs} is $V=\bigcup_{i=1}^t V^i$. Hence, by Corollary~\ref{cor: hardness}, the fact that the length of the strings is $k^2=\Theta(n^2)$, and the fact that $\size{cut(G_{\bar{x}})}=\Theta(t^2\log^2 k)=\Theta(\log^2 k)$, any algorithm for finding a $(3/4+\epsilon)$-approximation for maximum independent set in the CONGEST model with success probability at least $2/3$ requires $\Omega (k^2/(t\log t\cdot \size{cut(G_{\bar{x}})}\log |V|))=\Omega (n^2/(t\log t\cdot \log^3n)=\Omega(n^2/\log^3 n)$ rounds.
\end{proof}

\paragraph{Acknowledgments} We would like to thank Ami Paz for valuable discussions and useful comments. Yuval Efron was supported by the European Union's Horizon 2020 Research And Innovation Programme under grant agreement no.\ 755839, and also supported in part by the Binational Science Foundation (grant 2015803). Ofer Grossman was supported by a Fannie and John Hertz Foundation Fellowship, and an NSF GRFP fellowship. Finally, we thank the Simons institute for the generous hosting of the authors while performing this work.

\bibliographystyle{plain}
\bibliography{paper}

\end{document}